\documentclass[11pt]{article}
\usepackage[colorlinks=true,linkcolor=darkblue,citecolor=darkblue]{hyperref}
\usepackage{fullpage,times,fourier,charter,graphicx,color}
\usepackage{amsmath,amssymb,amsthm,enumerate,tocloft}
\usepackage{cleveref}
\crefformat{equation}{(#2#1#3)}
\crefmultiformat{equation}{(#2#1#3)}{ and~(#2#1#3)}{, (#2#1#3)}{ and~(#2#1#3)}

\def\maketitle{\par\noindent{\color{darkblue}\Large\bf\sffamily\thetitle\par}\vglue1.6ex
\noindent{\large\theauthor}\\[1ex]
\textit{\theaddress}\\[0.7ex]
{\small\today}\par\vglue1.4\bigskipamount}
\def\title#1{\def\thetitle{#1}}
\def\author#1{\def\theauthor{#1}}
\def\address#1{\def\theaddress{#1}}
\allowdisplaybreaks
\definecolor{darkred}{rgb}{0.9,0,0}
\definecolor{darkgray}{rgb}{0.4,0.4,0.4}
\definecolor{darkblue}{rgb}{0,0,0.7}
\definecolor{brown}{rgb}{0.6,0.2,0.2}

\let\eps=\varepsilon
\def\ii{\mathrm{i}}

\makeatletter

\renewcommand\section{\@startsection {section}{1}{\z@}%
  {-2.2ex \@plus -1ex \@minus -.2ex}%
  {1ex \@plus.1ex}%
  {\normalfont\bf\sffamily\color{darkblue}}}
\renewcommand\subsection{\@startsection{subsection}{2}{\z@}%
  {-1.6ex\@plus -0.4ex \@minus -.2ex}%
  {0.6ex \@plus .1ex}%
  {\normalfont\small\bf\sffamily\color{black}}}
\renewcommand\subsubsection{\@startsection{subsubsection}{3}{\z@}%
  {-0.6ex\@plus -0.2ex \@minus -.2ex}%
  {0.4ex \@plus .1ex}%
  {\normalfont\normalsize\it}}
\renewcommand\paragraph{\@startsection{paragraph}{4}{\z@}%
  {0.2ex \@plus0.2ex \@minus0.1ex}{-0.5em}%
  {\normalfont\normalsize\bfseries}}
\makeatother

\def\fbf#1{\setbox0=\hbox{$#1$}\kern-0.10\wd0
  \lower0.02em\copy0\kern-\wd0 \lower0.02em\hbox{\kern+0.05em\copy0}\kern-\wd0
  \raise0.02em\copy0\kern-\wd0 \raise0.02em\hbox{\kern-0.05em\box0}}
\let\truenabla=\nabla
\def\nabla{{\fbf\truenabla}}

\makeatletter
\let\tru@int=\int
\def\int{\mathop{\textstyle\tru@int}\limits}
\def\overl@ss#1#2{\vcenter{\offinterlineskip
        \ialign{$\m@th#1\hfil##\hfil$\crcr#2\crcr<\crcr } }}
\def\overgr@at#1#2{\vcenter{\offinterlineskip
        \ialign{$\m@th#1\hfil##\hfil$\crcr#2\crcr>\crcr } }}
\def\gl{\mathrel{\mathpalette\overl@ss>}}
\def\lg{\mathrel{\mathpalette\overgr@at<}}

\def\Real{\mathbb{R}}
\def\Complex{\mathbb{C}}

\def\Im{\mathop{\rm Im}\nolimits}

\def\partialvint{\int\kern-0.94em-\kern0.2em}

\let\^=\hat
\let\==\bar
\let\~=\tilde
\let\@=\mathbf
\let\_=\mathsf

\let\le=\leqslant
\let\ge=\geqslant

\def\e{\mathrm{e}}
\def\D{\mathcal{D}}
\def\P{\mathcal{P}}
\def\Q{\mathcal{Q}}
\def\R{\mathcal{R}}

\def\t{{\mathrm{t}}}

\def\diag{\mathop{\rm diag}\nolimits}

\def\half{{\textstyle\frac12}}

\makeatother

\advance\parskip 4pt
\advance\textwidth -2em
\advance\hoffset 1em
\advance\textheight 7mm
\advance\voffset -2mm

\def\be{\begin{equation}}
\def\ee{\end{equation}}
\def\bse{\begin{subequations}}
\def\ese{\end{subequations}}

\numberwithin{equation}{section}
\theoremstyle{plain}
\newtheorem{theorem}{Theorem}[section]
\newtheorem{lemma}[theorem]{Lemma}
\newtheorem{proposition}[theorem]{Proposition}
\newtheorem{corollary}[theorem]{Corollary}
\newtheorem{remark}[theorem]{Remark}
\newtheorem{definition}[theorem]{Definition}

2

\begin{document}
\title{On the dispersionless limit of the Manakov system, its Riemann invariants,
and the modulational stability of its counterpropagating plane waves}
\author{Jimmie Adriazola$^1$ and Gino Biondini$^2$}

\address{
\normalsize\it $^1$: School of Mathematical and Statistical Sciences, Arizona State University, Tempe, AZ\\
\normalsize\it $^2$: Department of Mathematics, State University of New York at Buffalo, Buffalo, NY}

\maketitle

\begin{quotation}
\noindent\small\textbf{Abstract.}
We study the dispersionless limit of the Manakov system, the integrable two-component generalization of the nonlinear Schrödinger equation. We derive the resulting four-component genus-zero Manakov-Whitham system, characterize its hydrodynamic structure, and show that it passes the Haantjes tensor test for integrability. We show that the branch points of the spectral curve associated with plane wave solutions of the Manakov system are the local Riemann invariants of the dispersionless system.
We also use the characteristic speeds to classify the baseband modulational stability/instability of the plane waves, and we study a direct linearization of the Manakov system to characterize their finite-wavenumber stability and verify agreement with the Whitham prediction in the long-wave limit.
Finally, we validate the predictions by comparing them with the results of direct numerical simulations.
\end{quotation}

\bigskip

\section{Introduction}

The Manakov system --- namely, the integrable two-component generalization of the celebrated nonlinear Schr\"odinger (NLS) equation, first written down in \cite{Manakov1974} --- is a prototypical system that arises in many different physical contexts such as water waves, nonlinear optics and Bose-Einstein condensates.
The Manakov system is also a completely integrable system, and as such it is amenable to exact analytical treatment.
As a result, it has been studied extensively in the literature
\cite{NMPZ1984,AS1981,APT2004,HasegawaKodama}.
In particular, the initial value problem for the focusing and defocusing Manakov system with zero boundary conditions at infinity was solved via the inverse scattering transform (IST) in \cite{Manakov1974},
which was later generalized to an arbitrary number of components as well as to matrix NLS systems in \cite{APT2004}.
The IST for the defocusing Manakov system with non-zero boundary conditions (NZBC) was formulated in 
\cite{PAB2006} and later revisited in \cite{BK2015}, while the IST for the focusing case was formulated in
\cite{KBK2015}, see also \cite{ABP_JPA2022,ABP_EAJAM2022,CaudrelierZhang2012,CaudrelierZhang2014}.

On the other hand, several fundamental questions concerning the Manakov system remain open.
Prominent examples of this state of affairs are the lack of a satisfactory characterization of its finite-genus solutions, the lack of a satisfactory solution of the initial value problem with periodic boundary conditions, and the lack of a comprehensive Whitham modulation theory,
(i.e., a systematic theory of the slow modulation of its periodic and quasi-periodic wave solutions).
For the scalar NLS equation, the availability of Riemann invariants for the modulation equations
is by now classical~\cite{ElHoefer2016,Kamchatnov,Whitham1974}.
For the Manakov system, however, the situation is considerably more delicate, in part because the
associated spectral curve is a trigonal (rather than hyperelliptic) Riemann surface~\cite{JNLS2000v10p291}.
Nonetheless, considerable progress has been made in a number of important works.
Wright~\cite{Wright2013} obtained a general class of one-phase (genus-one) solutions of the Manakov system by algebro-geometric methods and proposed that the branch points of the associated trigonal spectral curve should play the role of Riemann invariants for the corresponding Whitham system. He confirmed this correspondence in a nontrivial dispersionless (genus-zero) example. Theorem~\ref{t:branchpoints} below establishes a general local result at genus zero: every simple branch point $k_j$ is a local Riemann invariant, with characteristic speed $c_j=\lambda_j-k_j$. The proof follows directly from the exact algebraic identity~\eqref{e:keyidentity}, which shows that the gradient of each simple branch point is a left eigenvector of the dispersionless hydrodynamic matrix.
Independently, Kamchatnov~\cite{Kamchatnov2013EPL} obtained periodic solutions of the two-component system
via the finite-gap integration method within the Ablowitz-Kaup-Newell-Segur (AKNS) scheme, a program subsequently developed in a series of works.
In particular, Kamchatnov~\cite{Kamchatnov2014} established the connection between these two parametrizations
and derived the Whitham equations for the physically important classes of density (in-phase) and polarization
(counter-phase) waves, showing in each case that the modulation equations reduce to Riemann diagonal form.
Related studies of nonlinear polarization waves, dispersive hydrodynamics, and Riemann problems for
two-component Bose--Einstein condensates were carried out by Kamchatnov and collaborators
\cite{KKLP2014,LPK2013,CongyKamchatnovPavloff2016,IvanovKamchatnovCongyPavloff2017},
and the connection between the Manakov system and the Kowalevski equations was elucidated
in~\cite{KamchatnovSokolov2015}.
In~\cite{JNLS2000v10p291}, Forest, McLaughlin, Muraki and Wright derived the linearized dispersion relation for counterpropagating plane waves, identified long-wave and intermediate-wavelength instability regimes, and related the unstable modes to branch and double points of the spectral curve. Simple-wave reductions for two-component Bose--Einstein condensates were subsequently studied by Ivanov and Kamchatnov~\cite{IvanovKamchatnov2018}.

Despite this substantial body of work, a general formulation of Whitham modulation theory for the Manakov system remains by and large an open problem.
And, perhaps surprisingly, several basic questions are still open even for the dispersionless Manakov system, i.e., the system of genus-zero Whitham equations that govern the modulation of the plane wave solutions of the Manakov system, which is the object of the present work.

Specifically, in this work we present several novel results on the dispersionless Manakov system and the modulational stability of the plane-wave solutions of the Manakov system itself. 
First, in section~\ref{s:dispersionless} we derive the four-component dispersionless system and characterize its hydrodynamic structure. In section~\ref{s:Laxpair} we compute the spectral curve associated with the two-component plane waves and obtain explicitly the quartic whose roots are its finite branch points. The main structural result, proved in section~\ref{s:branchpoints}, is Theorem~\ref{t:branchpoints}: every simple branch point $k_j$ is a local Riemann invariant of the dispersionless system, with characteristic speed $c_j=\lambda_j-k_j$. The proof rests on the exact algebraic identity~\eqref{e:keyidentity}, which shows directly that the gradient of a simple branch point is a left eigenvector of the hydrodynamic coefficient matrix. In section~\ref{s:inversion}, after fixing the Galilean frame, we solve the generic inversion problem and prove in Proposition~\ref{p:inversionuniqueness} that the admissible inverse state is unique up to the residual component-exchange symmetry~\eqref{e:residualsymmetry}. In section~\ref{s:modulationalstability} we relate collisions of branch points to collisions of characteristic speeds through the exact discriminant identity~\eqref{e:branch-char-discriminant-identity}, obtain the complete baseband instability region, and prove that the dimensionless growth coefficient is strictly decreasing with $|s_3|$ at fixed counterflow and attains its global maximum $1/(2\sqrt2)$ at $s_3=0$ and $|W|=\sqrt{3/2}$. Finally, in section~\ref{s:linearization} we derive a discriminant criterion for the full finite-wavenumber linearized spectrum, recover the Whitham criterion exactly in the limit $\Xi\to0$, and resolve the degenerate Whitham boundary by showing that it is unstable for arbitrarily small nonzero perturbation wavenumbers. Section~\ref{s:numerics} compares these analytical predictions with direct numerical simulations,
while section~\ref{s:conclusions} ends this work with some final remarks.

\section{The dispersionless (genus-zero) Manakov-Whitham system}
\label{s:dispersionless}

The defocusing Manakov system  \cite{Manakov1974} in semiclassical scaling and dimensionless form is the coupled system of nonlinear partial differential equations
\be
\label{e:Manakov}
\ii \eps\,\@q_t + \tfrac12{\eps^2}\@q_{xx} - (\@q^\dagger\@q)\,\@q = 0,
\qquad
\@q = \@q(x,t)= (q_1,q_2)^\t:\Real\times\Real\mapsto\Complex^2\,,
\ee
where throughout this work the superscripts $\t$ and $\dag$ denote respectively matrix transpose and conjugate transpose and where  
the subscripts $x$ and $t$ denote partial differentiation.
The parameter $\eps>0$ quantifies the relative strength of dispersive effects compared to nonlinear ones, and is the dimensionless counterpart of Planck's constant in the quantum mechanical context.

\paragraph{Two-fluid Madelung ansatz and dispersionless limit of the Manakov system.}
We begin by expressing the solution of the defocusing Manakov system in the Madelung form 
via the two-component polar decomposition
\vspace*{-0.4ex}
\begin{equation}\label{e:ComponentMadelung}
q_j(x,t)=\sqrt{\rho_j(x,t)}\,\e^{\ii \Phi_j(x,t)},
\qquad
j=1,2,
\end{equation}
with $\rho_j\ge 0$.
The semiclassical limit of~\eqref{e:Manakov} is obtained when $\Phi_1$ and~$\Phi_2$ vary over 
spatial scales of $O(\epsilon)$.
Accordingly, one introduces the two component velocities
\vspace*{-0.6ex}
\begin{equation}
v_j(x,t):= \eps\partial_x \Phi_j,\qquad j = 1,2\,.
\end{equation}
In terms of these variables, the real and imaginary parts of the Manakov system \eqref{e:Manakov} yield the hydrodynamic system
\vspace*{-0.2ex}
\begin{gather}
\label{e:ExactCompContinuity}
\rho_{j,t}+(\rho_j v_j)_x=0,
\qquad
j=1,2,
\\
\label{e:ExactCompPhase}
\eps\Phi_{j,t}+\frac{1}{2}v_j^2+\rho
-\frac{\eps^2}{2}\frac{\left(\sqrt{\rho_j}\right)_{xx}}{\sqrt{\rho_j}}=0,
\qquad
j=1,2.
\end{gather}
Equation~\eqref{e:ExactCompContinuity} expresses the component mass conservation laws,
while \eqref{e:ExactCompPhase} encodes the conservation of momentum.
Indeed, differentiating \eqref{e:ExactCompPhase} in \(x\) gives the component momentum laws
\vspace*{-0.6ex}
\begin{equation}\label{e:ExactCompVelocity}
v_{j,t}+v_j v_{j,x}+\rho_x
=
\frac{\eps^2}{2}\partial_x\!\left(\frac{\left(\sqrt{\rho_j}\right)_{xx}}{\sqrt{\rho_j}}\right),
\qquad
j=1,2.
\end{equation}

Up to the extra spatial differentiation needed to derive \eqref{e:ExactCompVelocity},
the system of equations~\eqref{e:ExactCompContinuity} and~\eqref{e:ExactCompVelocity}
is equivalent to the original system~\eqref{e:Manakov}.
The dispersionless limit of the Manakov system is obtained by dropping the \(O(\eps^2)\) quantum pressure terms in~\eqref{e:ExactCompVelocity},
which yields the four-component first-order system
\bse
\label{e:ManakovDispless}
\begin{gather}
\label{e:DispComp1}
\rho_{1,t}+(\rho_1 v_1)_x=0,
\\
\rho_{2,t}+(\rho_2 v_2)_x=0,
\\
\label{e:DispComp2}
v_{1,t}+v_1v_{1,x}+(\rho_1+\rho_2)_x = 0,
\\
v_{2,t}+v_2v_{2,x}+(\rho_1+\rho_2)_x = 0,
\end{gather}
\ese
Equations \eqref{e:ManakovDispless} can also be written compactly in vector form as 
\begin{gather}
\label{e:DisplessManakovMatrixForm}
\@y_t+M(\@y)\,\@y_x=0,
\\
\intertext{with}
\@y=(\rho_1,\rho_2,v_1,v_2)^\t,
\end{gather}
and with the coefficient matrix
\begin{equation}\label{e:AComp}
M(\@y) = 
\begin{pmatrix}
v_1 & 0   & \rho_1 & 0\\
0   & v_2 & 0 & \rho_2\\
1   & 1   & v_1 & 0\\
1   & 1   & 0 & v_2
\end{pmatrix}.
\end{equation}

\begin{remark}
The system~\eqref{e:ManakovDispless}, or equivalently~\eqref{e:DisplessManakovMatrixForm},
is the dispersionless Manakov system, i.e., the system of genus-zero Whitham modulation equations for the Manakov system, and is the main subject of the present work.
\end{remark}

\begin{remark}
The characteristic speeds of the dispersionless Manakov system~\eqref{e:ManakovDispless} are the eigenvalues $c_1,\dots,c_4$ of $M(\@y)$, which are the roots of the first of the two quartics that will appear in the present work, namely, the characteristic polynomial of \(M(\@y)\):
\bse
\label{e:charspeedquartic}
\begin{equation}
P(c;\@y) = 
\det(M-c I_4)=((v_1-c)^2-\rho_1)((v_2-c)^2-\rho_2)-\rho_1\rho_2,
\end{equation}
where $I_n$ is the $n\times n$ identity matrix.
Or, equivalently,
\begin{multline}
P(c;\@y) =
c^4-2(v_1+v_2)c^3+\big(v_1^2+4v_1v_2+v_2^2-\rho_1-\rho_2\big)c^2 
\\
    + 2 \big(\rho_1v_2+\rho_2v_1-v_1^2v_2-v_1v_2^2\big)c
    + \big(v_1^2v_2^2-\rho_1v_2^2-\rho_2v_1^2\big).
\end{multline}
\ese
\end{remark}
The second quartic that will appear in this work arises from the spectral curve associated with the Manakov system, and will make its appearance in section~\ref{s:Laxpair}.
While both of these quartics will play a key role in the analysis, 
it will be important to distinguish between them.

\paragraph{Hamiltonian structure.}~
\unskip It is worthwhile to point out that the dispersionless system~\eqref{e:ManakovDispless} is Hamiltonian with respect to the constant local Poisson operator of hydrodynamic type
\be
\mathcal J
=
-\begin{pmatrix}
0&0&\partial_x&0\\
0&0&0&\partial_x\\
\partial_x&0&0&0\\
0&\partial_x&0&0
\end{pmatrix},
\ee
namely,
\be
\@y_t=\mathcal J\,\frac{\delta H}{\delta\@y},
\qquad
H[\@y]
=
\int
\left[
\frac12\rho_1v_1^2
+\frac12\rho_2v_2^2
+\frac12(\rho_1+\rho_2)^2
\right]\,dx.
\label{e:dispersionlessHamiltonian}
\ee
Indeed,
\[
\frac{\delta H}{\delta\rho_j}
=
\frac12v_j^2+\rho_1+\rho_2,
\qquad
\frac{\delta H}{\delta v_j}
=
\rho_jv_j,
\qquad j=1,2,
\]
which reproduces~\eqref{e:ManakovDispless}. Thus the system belongs to the class of Hamiltonian systems of hydrodynamic type with a nondegenerate Dubrovin--Novikov bracket~\cite{DubrovinNovikov1983}.

\paragraph{Haantjes tensor test and characteristic speed quartic.}
An efficient criterion to test the diagonalizability of a hydrodynamic system
that does not require the computation of the eigenvalues and eigenvectors of the coefficient matrix
was formulated in \cite{Haantjes} and further developed in~\cite{Ferapontov2006,MathAnn2007}, 
involving the vanishing of the Haantjes tensor associated with the coefficient matrix.
Specifically, for strictly hyperbolic systems, \cite{Ferapontov2006,MathAnn2007} give the following theorem 
as a necessary condition for integrability: 

\begin{theorem}(Haantjes)
A hydrodynamic type system with mutually distinct characteristic speeds is diagonalizable 
if and only if the corresponding Haantjes tensor is identically zero.
\end{theorem}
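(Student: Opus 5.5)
We prove the classical Haantjes theorem for a system $\@y_t+M(\@y)\@y_x=0$ on an $n$-dimensional domain whose coefficient matrix $M$ has pointwise distinct real eigenvalues $c_1,\dots,c_n$. The tools are the Nijenhuis torsion
\[
N_M(X,Y)=[MX,MY]-M[MX,Y]-M[X,MY]+M^2[X,Y]
\]
and the Haantjes tensor
\[
H_M(X,Y)=M^2N_M(X,Y)-MN_M(MX,Y)-MN_M(X,MY)+N_M(MX,MY).
\]
Here $M$ is viewed as a $(1,1)$-tensor field, and diagonalizability means the existence of local coordinates (Riemann invariants) in which $M$ is diagonal. Because the eigenvalues are distinct, they are smooth functions of $\@y$. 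The eigenvector fields $X_1,\dots,X_n$, with $MX_i=c_iX_i$, can likewise be chosen smooth and form a frame. Both $N_M$ and $H_M$ are tensorial, so it suffices to evaluate them on this frame.

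\textbf{Necessity.} Suppose $M=\diag(c_1,\dots,c_n)$ in Riemann coordinates $r$, so that $X_i=\partial_{r_i}$ commute. A direct computation gives
\[
N_M(\partial_i,\partial_j)=(c_i-c_j)\big(\partial_i c_j\,\partial_j+\partial_j c_i\,\partial_i\big),
\]
up to sign, for $i\neq j$. Substituting $X=\partial_i$, $Y=\partial_j$ into $H_M$ and using $M\partial_k=c_k\partial_k$, the $\partial_j$-component acquires the factor $(c_j-c_i)(c_j-c_j)=0$. The $\partial_i$-component likewise gets $(c_i-c_i)(c_i-c_j)=0$. Hence $H_M=0$.

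\textbf{Sufficiency.} This is the main step. For a general eigenframe, write
\[
[X_i,X_j]=\sum_k a^k_{ij}X_k.
\]
We expand $N_M(X_i,X_j)$ in the frame. Besides derivative terms of $c_i$ and $c_j$ lying in $\mathrm{span}(X_i,X_j)$, it contains bracket terms
\[
(c_ic_j-c_kc_j-c_ic_k+c_k^2)\,a^k_{ij}X_k=(c_k-c_i)(c_k-c_j)\,a^k_{ij}X_k.
\]
Inserting this into $H_M(X_i,X_j)$, the $X_k$-component picks up the extra factor $(c_k-c_i)(c_k-c_j)$. Therefore
\[
H_M(X_i,X_j)=\sum_k (c_k-c_i)^2(c_k-c_j)^2\,a^k_{ij}X_k,
\]
where the components along $X_i$ and $X_j$ vanish automatically. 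This bookkeeping, including confirming that the $\mathrm{span}(X_i,X_j)$ terms are killed, is the computation I expect to require the most care.

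Vanishing of $H_M$ together with distinctness of the eigenvalues forces $a^k_{ij}=0$ for all $k\neq i,j$. That is, $[X_i,X_j]\in\mathrm{span}(X_i,X_j)$. Consequently every distribution spanned by a subset of the eigenvector fields is involutive.

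By Frobenius, for each $i$ the distribution $D_i=\mathrm{span}\{X_j:j\neq i\}$ has rank $n-1$ and is integrable. It is therefore locally the kernel of $dr_i$ for some function $r_i$. Equivalently, $dr_i$ annihilates all $X_j$ with $j\neq i$, so $dr_i$ is a left eigenvector of $M$ with eigenvalue $c_i$.

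The covectors $dr_1,\dots,dr_n$ are dual, up to scaling, to the frame $\{X_j\}$, so they are independent and $r=(r_1,\dots,r_n)$ is a local coordinate system. Multiplying the system on the left by $\nabla r_i$ gives
\[
\partial_t r_i+c_i\,\partial_x r_i=0,
\]
which is the diagonal form.

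Finally, in the strictly hyperbolic setting the distinctness hypothesis holds locally, so the argument shows that $H_M\equiv0$ is necessary for Riemann invariants to exist. This justifies applying the test to the coefficient matrix $M(\@y)$ in~\eqref{e:AComp} wherever its characteristic speeds, the roots of~\eqref{e:charspeedquartic}, are distinct.
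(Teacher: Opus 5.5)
Your proof is correct: it is the classical eigenframe-plus-Frobenius argument. The paper gives no proof of this statement. It quotes the theorem from Haantjes and Ferapontov--Pavlov and only checks the hypothesis $H\equiv0$ by computer algebra in Proposition~\ref{p:haantjes}. Your write-up therefore supplies, for the distinct-eigenvalue case used here, what the paper takes on citation. The cited literature covers more general semisimple operators, including repeated eigenvalues; your argument is a self-contained verification of exactly the case needed.

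You should state explicitly the identity that does the ``bookkeeping'' you flag as delicate. Since $N_M$ is $C^\infty$-linear in each argument and $MX_i=c_iX_i$, the definition of $H_M$ gives at once
\[
H_M(X_i,X_j)=(M-c_i)(M-c_j)\,N_M(X_i,X_j).
\]
The $\mathrm{span}(X_i,X_j)$ part of $N_M(X_i,X_j)$ is annihilated immediately by this factor. Hence $H_M(X_i,X_j)=\sum_k (c_k-c_i)^2(c_k-c_j)^2a^k_{ij}X_k$, with no further cancellation to check. The same identity also gives your necessity computation in one line.

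There is a small slip in your last paragraph. To pass from Proposition~\ref{p:haantjes} to the existence of Riemann invariants, and with the Hamiltonian structure to Tsarev integrability, the paper needs the sufficiency direction, namely $H_M\equiv0\ \Rightarrow$ diagonalizable. Necessity is not what justifies applying the test. You have proved both directions, so this only affects how the application is worded.
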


The calculation of the Haantjes tensor requires calculation of the Nijenhuis tensor first. 
The Nijenhuis tensor of a matrix $M^i _j$ is defined as
\be
N^i _{jk}(M) = 
  M^p _j \partial_{y^p}M^i _k 
  - M^p _k \partial_{y^p}M^i _j 
  - M^i _p ( \partial_{y^j}M^p _k -\partial_{y^k}M^p _j)
\ee
where $\partial_{y^k} f = \partial f/\partial{y^k}$. 
In our case, the matrix $M^i _j$ is the coefficient matrix of the system~\eqref{e:DisplessManakovMatrixForm}, 
for which diagonalizability  is being tested.
Once the Nijenhuis tensor is known, the Haantjes tensor can be obtained as
\be
H^i _{jk}(M) = N ^i _{pr}M^p _j M^r _k - N ^p _{jr}M^i _p M^r _k -N ^p _{rk}M^i _p M^r _j + N ^p _{jk}M^i _r M^r _p\,.
\ee
The above test, applied to the system~\eqref{e:DisplessManakovMatrixForm}, yields:

\begin{proposition}
\label{p:haantjes}
All 64 entries of the Haantjes tensor of the dispersionless Manakov system~\eqref{e:DisplessManakovMatrixForm} vanish identically.
\end{proposition}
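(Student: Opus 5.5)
The plan is a direct symbolic computation, organized so that most terms vanish for structural reasons. The starting observation is that $M(\@y)$ is affine in $\@y=(\rho_1,\rho_2,v_1,v_2)$. Its only nonconstant entries are $M^1_1=v_1$, $M^2_2=v_2$, $M^3_3=v_1$, $M^4_4=v_2$, $M^1_3=\rho_1$, and $M^2_4=\rho_2$. Consequently each $\partial_{y^p}M$ is a constant matrix with at most two nonzero entries:
\[
\partial_{\rho_1}M=E_{13},\quad \partial_{\rho_2}M=E_{24},\quad \partial_{v_1}M=E_{11}+E_{33},\quad \partial_{v_2}M=E_{22}+E_{44}.
\]
Substituting these into the definition of $N^i_{jk}$ gives expressions that are at most linear in $\@y$. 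I would tabulate the nonzero components of $N^i_{jk}$, using the antisymmetry $N^i_{jk}=-N^i_{kj}$ so that only the $4\cdot 6=24$ components with $j<k$ need computing. I expect only a handful to be nonzero, for example components built from $M^i_p(\partial_jM^p_k-\partial_kM^p_j)$ with $j,k$ ranging over $\rho_1$ and $v_1$.

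Next I would assemble $H^i_{jk}$. Rather than expanding index by index, I would use the matrix form. For fixed vectors $X,Y$, set $N(X,Y)^i=N^i_{jk}X^jY^k$. Then
\[
H(X,Y)=N(MX,MY)-M\,N(X,MY)-M\,N(MX,Y)+M^2N(X,Y).
\]
It suffices to check $H(e_j,e_k)=0$ for $j<k$, since $H$ is antisymmetric as well. That reduces the work to 24 components, each a polynomial of degree at most three in $\@y$.

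A useful check, and possibly a shortcut, is to exploit the symmetry $(\rho_1,v_1)\leftrightarrow(\rho_2,v_2)$. This symmetry permutes indices $1\leftrightarrow2$ and $3\leftrightarrow4$ and maps $M$ to itself, so it maps $H$ to itself. Only roughly half of the component pairs $(j,k)$ then need independent verification.

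The main obstacle is bookkeeping: cubic polynomials must cancel exactly across four terms. I would carry this out with a computer algebra system, implementing the two displayed index formulas literally and simplifying all 64 entries to zero. I would use the hand-structured computation above as an independent cross-check on a few representative entries, such as $H^1_{13}$ and $H^3_{34}$.

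As a conceptual sanity check, I would note that the later results in the paper exhibit, near any strictly hyperbolic point, four functionally independent Riemann invariants $k_j$ whose gradients are left eigenvectors of $M$. Diagonalizability on an open set, together with the Haantjes theorem quoted above, forces $H\equiv0$ there. Because the entries of $H$ are polynomials, vanishing on an open set implies that they vanish identically. Since that argument relies on results stated later, the primary proof would remain the explicit computation.
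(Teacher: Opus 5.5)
Your proposal is correct and is essentially the paper's proof: a direct symbolic evaluation of the Nijenhuis tensor followed by the Haantjes tensor. Your matrix form of $H$, the antisymmetry reduction to 24 components, the $(\rho_1,v_1)\leftrightarrow(\rho_2,v_2)$ symmetry, and the cross-check through the later Riemann invariants are all fine.

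One small correction to your heuristic guess about which components are nonzero. Because the system is in conservation form, $M$ is the Jacobian of the flux, so $\partial_jM^p_k$ is symmetric in $j,k$. Hence the term $M^i_p(\partial_jM^p_k-\partial_kM^p_j)$ vanishes identically. The Nijenhuis tensor is then constant: its only nonzero entries are $N^1_{21}=N^2_{12}=N^3_{13}=N^3_{23}=N^4_{14}=N^4_{24}=1$ and their antisymmetric counterparts. As a result, $H$ is at most quadratic in $\mathbf{y}$, not cubic.
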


\begin{proof}
The explicit calculation of tensor entries for the system~\eqref{e:DisplessManakovMatrixForm} is easily carried out with any computer algebra software, and allows one to show that all entries of the Nijenhuis tensor vanish except those with indices (1,2,1), (2,1,2), (3,1,3), (3,2,3), (4,1,4) and~(4,2,4), all of which are equal to~1,
and those with indices (1,1,2), (2,2,1), (3,3,1), (3,3,2), (4,4,1) and (4,4,2), all of which are equal to~$-1$.
Using these values, it is then straightforward to verify that all entries of $H^i_{jk}$ are identically zero.
\end{proof}

Since~\eqref{e:ManakovDispless} is Hamiltonian with respect to a nondegenerate Poisson bracket of hydrodynamic type, Proposition~\ref{p:haantjes} implies semi-Hamiltonianity on the strictly hyperbolic set, and hence integrability in the sense of Tsarev~\cite{DubrovinNovikov1983,Tsarev}. Theorem~\ref{t:branchpoints} below identifies the corresponding local Riemann invariants explicitly as the simple branch points of the spectral curve.

\section{Lax pair, plane wave ansatz, spectral curve and branch points}
\label{s:Laxpair}

\paragraph{Lax pair and Galilean invariance.}
For simplicity, in the calculations of this section we rescale $x$ and $t$ 
(by letting $\~x=x/\eps$ and $\~t=t/\eps$ and dropping tildes), 
which is equivalent to setting $\eps=1$ in~\eqref{e:Manakov}. 
In this normalization, the Manakov system~\eqref{e:Manakov} is the zero-curvature compatibility condition
$X_t-T_x+[X,T]=0$ of the Lax pair
\vspace*{-1ex}
\bse
\label{e:Laxpair}
\begin{gather}
\phi_x=X\phi,
\label{e:scatteringproblem}
\\
\phi_t=T\phi,
\end{gather}
\ese
with $\phi:=\phi(x,t,k):\Real^2\times\Complex\mapsto\Complex^3$ and
\bse
\begin{gather}
X(x,t,k)=\ii kJ+Q,
\qquad
J=\diag(1,-1,-1),
\qquad
Q(x,t)=\begin{pmatrix}0&\@q^\dag\\ \@q&O_2\end{pmatrix},
\\
T(x,t,k)=\ii k^2J+kQ+\tfrac\ii2JQ^2-\tfrac\ii2JQ_x.
\end{gather}
\ese
where $O_2$ is the $2\times2$ zero matrix.
Equation~\eqref{e:scatteringproblem} is referred to as 
the scattering, or spectral, problem; $\phi$ is the eigenfunction, $k$ is the spectral parameter, and $Q$ is the scattering potential.

Like the scalar NLS equation, the Manakov system~\eqref{e:Manakov} is invariant under Galilean transformations, which plays a key role in what follows.
Explicitly, in the normalization $\eps=1$, if $\@q(x,t)$ solves~\eqref{e:Manakov}, then
\begin{equation}
\@q_a(x,t)=\e^{\ii(ax-a^2t/2)}\@q(x-at,t),
\qquad a\in\Real,
\label{e:Galilean}
\end{equation}
is also a solution. At $t=0$, the corresponding potential satisfies
\[
Q_a(x,0)=\e^{-\ii a xJ/2}Q(x,0)\,\e^{\ii a xJ/2}.
\]
It is then easy to see that, if $\phi(x,0,k)$ solves the original scattering problem~\eqref{e:scatteringproblem}, then
$\phi_a(x,0,k-a/2)=\e^{-\ii a xJ/2}\phi(x,0,k)$ solves the scattering problem with the boosted potential~\eqref{e:Galilean}. 
Thus, a Galilean boost sends
\[
v_j\mapsto v_j+a,
\qquad
k\mapsto k-\tfrac12 a.
\]
The corresponding transformation for the branch-point pair $(k,\lambda)$ that will be introduced in section~\ref{s:branchpoints} is
\[
(k_j,\lambda_j)\mapsto (k_j-\tfrac12 a,\lambda_j+\tfrac12 a),
\]
so that $c_j=\lambda_j-k_j$ transforms as $c_j\mapsto c_j+a$.
We take
advantage of this invariance when studying the inversion problem in section~\ref{s:inversion} as well as when carrying out the modulational stability analysis of the plane wave solutions of the Manakov
system in section~\ref{s:modulationalstability}.

\paragraph{Spectral curve of two-component plane waves and branch points.}
We now seek to connect the genus-zero Whitham modulation system to the spectral problem, and to this end, we analyze the Lax spectrum associated with plane wave solutions of the Manakov system.
If $\rho_1,\rho_2,v_1,v_2$ are independent of $x$ and $t$,
the two-component plane wave potential is given by
\be
\@q_\mathrm{pw}(x,t)=
\begin{pmatrix}
\sqrt{\rho_1}\,\e^{\ii(v_1x-\omega_1t)}\\
\sqrt{\rho_2}\,\e^{\ii(v_2x-\omega_2t)}
\end{pmatrix},
\qquad
\omega_j=\tfrac12v_j^2+\rho,
\qquad
\rho=\rho_1+\rho_2,
\label{e:planewaveansatz}
\ee
where $\rho_1,\rho_2\ge0$ and $v_1,v_2\in\Real$ are constants. 
Note that this is an exact solution of~\eqref{e:Manakov} under the normalization $\eps=1$.
To compute the Lax spectrum associated with this plane-wave ansatz, 
we solve the scattering problem~\eqref{e:scatteringproblem} with the corresponding matrix potential
\be
Q_\mathrm{pw}(x,t)=
\begin{pmatrix}
0 & \sqrt{\rho_1}\,\e^{-\ii(v_1x-\omega_1t)} & \sqrt{\rho_2}\,\e^{-\ii(v_2x-\omega_2t)}\\
\sqrt{\rho_1}\,\e^{\ii(v_1x-\omega_1t)} & 0 & 0\\
\sqrt{\rho_2}\,\e^{\ii(v_2x-\omega_2t)} & 0 & 0
\end{pmatrix}.
\ee

The resulting equations are a first-order system of variable coefficient ODEs,
which is generally difficult to solve in closed form.
However, the problem can be simplified considerably.
Indeed, since the spectrum of the scattering problem is independent of time, we may restrict our analysis without loss of generality to the initial potential. At that initial time, the plane-wave potential admits the factorization
\[
Q_\mathrm{pw}(x,0)=\e^{\ii\Sigma x}Q_o\e^{-\ii\Sigma x},
\qquad
Q_o=
\begin{pmatrix}
0&\sqrt{\rho_1}&\sqrt{\rho_2}\\
\sqrt{\rho_1}&0&0\\
\sqrt{\rho_2}&0&0
\end{pmatrix},\qquad
\Sigma=\diag(0,v_1,v_2)\,,
\]
and the corresponding change of dependent variable removes the explicit spatial dependence from the scattering problem
\[
\widetilde\phi(x,0,k)=\e^{-\ii\Sigma x}\phi(x,0,k).
\]
The transformed scattering problem~\eqref{e:scatteringproblem}, therefore, takes the constant-coefficient form
\[
\widetilde\phi_x=X_o\widetilde\phi,
\qquad
X_o=\ii kJ-\ii\Sigma+Q_o,
\]
which is much easier to solve.

By seeking solutions in the form $\widetilde\phi(x,0,k)=\e^{-\ii\lambda x}\@f_o$, one finds
\be
F(k,\lambda;\@y)=0,
\label{e:spectralcurve}
\ee
where $\@y=(\rho_1,\rho_2,v_1,v_2)^\t$ and
\be
\begin{aligned}
F(k,\lambda;\@y)
&:=-\ii\det(X_o+\ii\lambda I_3)\\
&=(k-\lambda+v_2)\rho_1+(k-\lambda+v_1)\rho_2
 -(k+\lambda)(k-\lambda+v_1)(k-\lambda+v_2).
\end{aligned}
\label{e:spectralcubic2}
\ee
Equation~\eqref{e:spectralcurve} defines the spectral curve of the scattering problem for the two-component plane wave ansatz~\eqref{e:planewaveansatz}. In general, \eqref{e:spectralcurve} defines a three-sheeted covering of the complex $k$~plane. 
The explicit expressions for the three roots $\lambda_1,\dots,\lambda_3$ as functions of $k$ are quite complicated (the output is omitted for brevity).
Here, however, the explicit expressions will not be needed, since we only require the branch points of the spectral curve. 

To this end, it is convenient to introduce the symmetrized variables $\rho$, $s_3$, $v$, and $w$ defined by
\bse
\be
\rho=\rho_1+\rho_2,
\qquad
s_3=\frac{\rho_1-\rho_2}{\rho_1+\rho_2},
\qquad
v=v_1+v_2,
\qquad
w=v_1-v_2,
\label{e:zfromy}
\ee
so that
\be
\rho_1=\half\rho(1+s_3),
\qquad
\rho_2=\half\rho(1-s_3),
\qquad
v_1=\half(v+w),
\qquad
v_2=\half(v-w).
\label{e:yfromz}
\ee
\ese
Note from the above definitions that $-1\le s_3\le 1$.  
The parameter $s_3$, which quantifies the relative energy split between the two components, is the third element of the vector $\^s = (s_1,s_2,s_3)^\t$ defined as $s_j = \@q^\dag\sigma_j\@q/(\@q^\dag\@q)$
(where $\sigma_1,\sigma_2,\sigma_3$ are the Pauli matrices with $\sigma_3=\diag(1,-1)$), 
which is referred to as the Stokes polarization vector in optics \cite{Damask} and as the Bloch vector in 
quantum mechanics \cite{NielsenChuang}.
The mean wavenumber is $v/2$, whereas $w$ is the relative wavenumber, or counterflow
between the two components.

\begin{theorem}
Suppose that $w\ne0$ and that the discriminant polynomial defined below has four simple roots. Then the finite simple branch points of the spectral curve $F(k,\lambda,\@y)=0$ are precisely the roots $k_1,\dots,k_4$ of the polynomial
\begin{equation}
\label{e:Deltadef}
\Delta(k,\@y) = a_4 k^4 + a_3 k^3 + a_2 k^2 + a_1 k + a_0\,,
\end{equation}
with
\bse
\label{e:acoeffs}
\begin{align}
a_4 &= 16 w^2,\\
a_3 &= 16w\,(vw-s_3\rho),\\
a_2 &= 4 \rho ^2 - 4 \rho (3 s_3 v w + 5 w^2) + 6 v^2 w^2 - 2w^4\,,\\
a_1 &= \rho ^2 \left(18 s_3 w+2 v\right) - \rho\,(10 v w^2 + 3 s_3 v^2 w - 9 s_3 w^3 ) + v ^3 w^2 - v w^4\,,
\\
a_0 & = -4 \rho ^3 + \tfrac14 \rho^2\big(v^2+12 w^2 + 9 s_3 w(2 v-3 s_3 w)\big)
\nonumber\\&\kern8em{}
  - \tfrac14 \rho w \left(5 v^2 w + s_3(v^3-9 v w^2) + 3 w^3\right)
  + \tfrac1{16} (v^4 w^2 - 2 v^2 w^4+w^6)\,,
\end{align}
\ese
\end{theorem}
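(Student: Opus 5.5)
Finite branch points of the three-sheeted covering defined by $F(k,\lambda;\@y)=0$ are the values of $k$ at which two of the roots $\lambda_1,\lambda_2,\lambda_3$ coalesce. They are therefore the zeros of the discriminant of $F$, viewed as a cubic in $\lambda$. The plan is to compute this discriminant explicitly and show that, up to a nonzero constant factor, it equals $\Delta(k,\@y)$.

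\textbf{Step 1: reduce to a depressed cubic.} Set $\mu=\lambda-k$. Then $k+\lambda=2k+\mu$, and
\[
F=\rho_2(v_1-\mu)+\rho_1(v_2-\mu)-(2k+\mu)(v_1-\mu)(v_2-\mu).
\]
The coefficient of $\mu^3$ is $-1$, so $F$ is a genuine cubic in $\lambda$ for every $k$. Consequently no root escapes to infinity at finite $k$, and the only finite ramification comes from root collisions. Next substitute~\eqref{e:yfromz}, writing $\rho_1=\tfrac12\rho(1+s_3)$, $v_1=\tfrac12(v+w)$, and so on. The resulting coefficients are polynomial in $(k,\rho,s_3,v,w)$.

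\textbf{Step 2: compute the discriminant.} For a cubic $a\mu^3+b\mu^2+c\mu+d$ the discriminant is
\[
\mathrm{Disc}=b^2c^2-4ac^3-4b^3d-27a^2d^2+18abcd.
\]
Apply this with $a=-1$, $b=v-2k$ (up to sign), and the corresponding $c$ and $d$. Since the degrees in $k$ are $b=O(k)$, $c=O(k)$ and $d=O(k)$, the result is a quartic in $k$. Its leading coefficient comes from $b^2c^2$ and $-4b^3d$, where $c\sim 2k(\dots)$ and $d\sim -2kv_1v_2$. I expect this leading coefficient to be proportional to $(v_1-v_2)^2=w^2$, which is consistent with $a_4=16w^2$. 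This is exactly why the hypothesis $w\neq0$ is needed: it makes the degree exactly four, so all four roots are finite. The expansion and collection of terms is routine with computer algebra. The goal is to verify that $\mathrm{Disc}_\lambda F=C\,\Delta(k,\@y)$ for a constant $C\neq0$, which I expect to be $C=1/16$ or of that kind, and to match each $a_j$ in~\eqref{e:acoeffs}.

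\textbf{Step 3: simple roots of $\Delta$ give simple branch points.} This is the main obstacle. A root $k_j$ of $\Delta$ could in principle be a triple collision, or a node where two sheets merely touch without ramifying. To rule this out, assume $k_j$ is a simple root of $\Delta$.
\begin{itemize}
\item \emph{Not a triple collision.} A triple root $\lambda$ of the cubic at $k_j$ would make the discriminant vanish to order at least $2$ there, contradicting simplicity.
\item \emph{A genuine square-root branch point.} So exactly two roots coalesce, at some $\lambda_j$, with $F=F_\lambda=0$ and $F_{\lambda\lambda}\neq0$. Near a double root, $\mathrm{Disc}\propto F_k\cdot(\text{nonzero})\cdot(k-k_j)$ to leading order. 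Hence a simple zero of the discriminant forces $F_k(k_j,\lambda_j)\neq0$. Then, by the Weierstrass preparation argument, $\lambda-\lambda_j\sim\pm\sqrt{k-k_j}$, which is a genuine branch point.
\end{itemize}

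\textbf{Step 4: conclude the equivalence.} Conversely, any finite branch point requires a root collision, hence is a zero of the discriminant and so a root of $\Delta$. Together with Step 3, this shows that, under the hypotheses, the finite simple branch points are exactly $k_1,\dots,k_4$.
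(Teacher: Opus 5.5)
Your proposal is correct and follows essentially the same route as the paper: compute the discriminant of $F$ as a cubic in $\lambda$, identify it with $\Delta$, and use the simplicity of its roots to obtain $F=F_\lambda=0$, $F_{\lambda\lambda}\ne0$, $F_k\ne0$. Your Step~3 supplies two justifications the paper only asserts: a triple root is impossible because $-4p^3-27q^2$ would vanish to second order, and $\mathrm{Disc}'(k_j)$ is a nonzero multiple of $F_k(k_j,\lambda_j)$. One small correction: the constant is exactly $C=1$, not $1/16$, because the $k^4$ coefficient coming from $b^2c^2-4b^3d$ is $16v^2-64v_1v_2=16w^2=a_4$.
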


\begin{proof}
For fixed $k$, equation~\eqref{e:spectralcurve} is cubic in $\lambda$. A multiple value of $\lambda$ therefore occurs precisely when the discriminant of this cubic vanishes (e.g., see \cite{Bliss1933,Miranda1995}). A direct calculation shows that this discriminant is the polynomial $\Delta(k,\@y)$ in~\eqref{e:Deltadef}. Since $w\ne0$, its leading coefficient is $a_4=16w^2\ne0$, so $\Delta(k,\@y)$ is genuinely quartic.

Let $k_j$ be a simple root of $\Delta$. The cubic equation $F(k_j,\lambda,\@y)=0$ then has exactly one double root $\lambda_j$ and one simple root. Thus,
\be
F(k_j,\lambda_j,\@y)=0,\qquad
F_\lambda(k_j,\lambda_j,\@y)=0,\qquad
F_{\lambda\lambda}(k_j,\lambda_j,\@y)\ne0.
\ee
Moreover, the simplicity of $k_j$ as a zero of the discriminant implies
\be
F_k(k_j,\lambda_j,\@y)\ne0.
\ee
Hence $(k_j,\lambda_j)$ is a simple ramification point of the spectral curve. Therefore the four simple roots of $\Delta(k,\@y)$ are precisely the four finite simple branch points.
\end{proof}

\begin{remark}
The discriminant $\Delta(k,\@y)$ is the second quartic arising in this work. On the generic set $w\ne0$ where its roots are simple, the spectral curve therefore has four finite simple branch points. At special parameter values, such as $w=0$, $s_3=\pm1$, or on the discriminant locus of $\Delta$ itself, the degree or multiplicity structure can degenerate and the corresponding points must be treated separately. The existence of four generic branch points is the first hint that they could be related to the local Riemann invariants of the dispersionless Manakov system~\eqref{e:ManakovDispless}.
\end{remark}

In order to compare the above results to the spectral calculations of Baronio et al. in~\cite{Baronio2014},  it is useful to consider the symmetric frame $v=0$. 
Their background parameters $a_1,a_2,q$ are related to the variables used in this work via the identities 
\begin{equation}
\rho=a_1^2+a_2^2,\qquad
s_3=\frac{a_1^2-a_2^2}{a_1^2+a_2^2},\qquad
w=2q.
\end{equation}
Under this change of variables, the monic polynomial $\widetilde\Delta=\Delta/(16w^2)$ obtained from~\eqref{e:Deltadef} agrees coefficient by coefficient with the symmetric quartic spectral condition obtained in~\cite{Baronio2014}, with the same spectral parameter $k$. Thus, in the symmetric frame, the two quartics are identical. The additional result established below is that each simple root of this quartic is a local Riemann invariant of the dispersionless Manakov system, with characteristic speed determined by Theorem~\ref{t:branchpoints}.

\section{Riemann invariants of the dispersionless Manakov system}
\label{s:branchpoints}

We now prove that, in fact, the branch points of the spectral curve defined by Equation~\eqref{e:spectralcubic2} provide genuine local Riemann invariants of the dispersionless Manakov system.

\begin{theorem}
\label{t:branchpoints}
Let $(k_j,\lambda_j)$ be a simple ramification point of the spectral curve, in the sense that
\[
F(k_j,\lambda_j;\@y)=0,
\qquad
F_\lambda(k_j,\lambda_j;\@y)=0,
\qquad
F_{\lambda\lambda}(k_j,\lambda_j;\@y)\ne0,
\qquad
F_k(k_j,\lambda_j;\@y)\ne0.
\]
Then $k_j$ is a local Riemann invariant for the genus-zero Manakov--Whitham system~\eqref{e:DisplessManakovMatrixForm}.
That is,
\begin{equation}
\label{e:ManakovDispersionlessDiagonal}
(k_j)_t+c_j(k_j)_x=0\,,
\end{equation}
where the corresponding characteristic speed is
\begin{equation}
c_j=\lambda_j-k_j,
\label{e:cjdef}
\end{equation}
and where \((k_j,\lambda_j)\) is the associated simple branch-point pair.
\end{theorem}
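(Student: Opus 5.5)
The plan is to show that the gradient $\nabla_{\@y}k_j$ is a left eigenvector of $M(\@y)$ with eigenvalue $c_j=\lambda_j-k_j$. Once this holds, \eqref{e:ManakovDispersionlessDiagonal} follows at once by the chain rule: $(k_j)_t+c_j(k_j)_x=\nabla k_j\cdot(\@y_t+c_j\@y_x)=\nabla k_j\cdot(c_jI_4-M)\@y_x=0$.

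\emph{Step 1: smoothness and the gradient formula.} The hypotheses $F_\lambda=0$ and $F_{\lambda\lambda}\ne0$, $F_k\ne0$ at $(k_j,\lambda_j)$ make the Jacobian of the map $(k,\lambda)\mapsto(F,F_\lambda)$ triangular,
\[
\det\begin{pmatrix}F_k&F_\lambda\\ F_{\lambda k}&F_{\lambda\lambda}\end{pmatrix}=F_kF_{\lambda\lambda}\ne0 .
\]
By the implicit function theorem, $k_j(\@y)$ and $\lambda_j(\@y)$ are therefore smooth nearby. Differentiating the identity $F(k_j,\lambda_j;\@y)=0$ in $\@y$ and using $F_\lambda=0$ gives
\[
\nabla k_j=-\nabla_{\@y}F/F_k ,
\]
evaluated at the branch point. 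So it suffices to show that the row vector $\nabla_{\@y}F$ is a left eigenvector of $M$ with eigenvalue $\lambda-k$, whenever $F=F_\lambda=0$.

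\emph{Step 2: the key identity.} From \eqref{e:spectralcubic2}, write $\alpha=k-\lambda+v_1$ and $\beta=k-\lambda+v_2$. Then
\[
F_{\rho_1}=\beta,\quad F_{\rho_2}=\alpha,\quad F_{v_1}=\rho_2-(k+\lambda)\beta,\quad F_{v_2}=\rho_1-(k+\lambda)\alpha .
\]
I will compute $\nabla F\,(M-(\lambda-k)I_4)$ entrywise, noting that $v_i-(\lambda-k)=\alpha$ or $\beta$ respectively. I then aim for an exact polynomial identity of the form
\[
\nabla_{\@y}F\,\bigl(M-(\lambda-k)I_4\bigr)=A\,F\,\@e+B\,F_\lambda\,\@e'
\]
for explicit row vectors $\@e,\@e'$ and scalars $A,B$. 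This would be the key identity~\eqref{e:keyidentity} referred to in the introduction.

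For example, the first entry is
\[
\beta\alpha+\rho_2-(k+\lambda)\beta+\rho_1-(k+\lambda)\alpha .
\]
Since $F_\lambda=-\rho+\alpha\beta+(k+\lambda)(\alpha+\beta)$, this entry equals $2\alpha\beta-F_\lambda$. That is not yet zero, so the identity must also involve multiples of $F$ and $F_\lambda$ that combine with this. The third entry is $\rho_1\beta+\alpha(\rho_2-(k+\lambda)\beta)=F$, which vanishes on the curve.

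\emph{Step 3: the main obstacle.} The main obstacle is matching the remaining entries. If $2\alpha\beta$ does not cancel, the correct eigenvalue normalization may need adjustment, for instance the speed sign convention. I will therefore first compute all four entries symbolically and reduce them modulo the ideal generated by $(F,F_\lambda)$. If a direct reduction fails, I will instead check on the left-eigenvector side, using $\det(M-cI)$ from \eqref{e:charspeedquartic}. The check is to eliminate $\lambda$ from $F=F_\lambda=0$ and verify that $c=\lambda-k$ satisfies $P(c;\@y)=0$ with the claimed eigenvector. Concretely, a left eigenvector $(\ell_1,\ell_2,\ell_3,\ell_4)$ of $M$ for eigenvalue $c$ satisfies $\ell_3=\ell_1/(c-v_1)\cdot(\ldots)$. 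Comparing ratios $\ell_1:\ell_3=\beta:(\rho_2-(k+\lambda)\beta)$ with the structure of the eigenvector reduces everything to the two curve equations. I expect this to work out after a rescaling of the terms.
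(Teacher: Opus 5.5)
\paragraph{Same route, but the key step is not established.} Your architecture is the paper's: implicit differentiation giving $\nabla_{\mathbf{y}}k_j=-\nabla_{\mathbf{y}}F/F_k$, a left-eigenvector identity for $\nabla_{\mathbf{y}}F$ at the branch point, and the chain rule. Step~1 is correct. Your Jacobian argument ($\det=F_kF_{\lambda\lambda}\ne0$ because $F_\lambda=0$) in fact justifies the smooth dependence of $(k_j,\lambda_j)$ on $\mathbf{y}$, which the paper's implicit differentiation uses tacitly. The chain-rule conclusion is also correct. As written, however, the proposal does not prove the step that carries the theorem. Step~2 stops at an apparent obstruction, and Step~3 only lists fallbacks (reduction modulo $(F,F_\lambda)$, a possible change of sign convention for the speed, ``a rescaling'') without carrying any of them out. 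So the left-eigenvector identity is left unverified, and your text even suggests it might fail.

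\paragraph{The obstruction comes from a sign error in $F_\lambda$.} Since $\partial_\lambda\alpha=\partial_\lambda\beta=-1$, differentiating $-(k+\lambda)\alpha\beta$ gives $-\alpha\beta+(k+\lambda)(\alpha+\beta)$. Hence
\[
F_\lambda=-\rho-\alpha\beta+(k+\lambda)(\alpha+\beta)=-\rho+k^2+2k\lambda-3\lambda^2+2(v_1+v_2)\lambda-v_1v_2 ,
\]
not $-\rho+\alpha\beta+\cdots$. Your first entry $\alpha\beta+\rho-(k+\lambda)(\alpha+\beta)$ is therefore exactly $-F_\lambda$, with no leftover $2\alpha\beta$. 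The second entry, $\alpha\beta+F_{v_1}+F_{v_2}$, is the same expression, so it is also $-F_\lambda$. The fourth entry, $\alpha\rho_2+\beta\bigl(\rho_1-(k+\lambda)\alpha\bigr)$, equals $F$, like the third. Thus
\[
\nabla_{\mathbf{y}}F\,\bigl(M-(\lambda-k)I_4\bigr)=\bigl(-F_\lambda,\,-F_\lambda,\,F,\,F\bigr)
\]
holds identically in $(k,\lambda,\mathbf{y})$; this is exactly the paper's identity~\eqref{e:keyidentity}. At $F=F_\lambda=0$ it gives the left-eigenvector relation with eigenvalue exactly $\lambda_j-k_j$. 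No reduction modulo an ideal, rescaling, or change of speed convention is needed, and the sign change you contemplate would give the wrong speed. With this computation inserted, your Step~1 and chain-rule argument complete the proof.
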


\begin{proof}
Let \((k_j,\lambda_j)\) satisfy the simple-branch-point conditions
\begin{equation}
F(k_j,\lambda_j;\@y)=0,
\qquad
F_{\lambda}(k_j,\lambda_j;\@y)=0,
\qquad
F_{\lambda\lambda}(k_j,\lambda_j;\@y)\neq0,\qquad
F_k(k_j,\lambda_j;\@y)\neq 0, 
\end{equation}
The first three conditions imply that \(\lambda_j\) is a double root of the cubic spectral equation over the branch point \(k_j\).

The partial derivatives of \(F\) with respect to $k$, $\lambda$ and $\@y$ are given explicitly by
\bse
\begin{equation}
\begin{aligned}
&\begin{aligned}
F_{\rho_1} &= k-\lambda+v_2, &
F_{\rho_2} &= k-\lambda+v_1,\\
F_{v_1} &= \rho_2-(k-\lambda+v_2)(k+\lambda),\quad &
F_{v_2} &= \rho_1-(k+\lambda)(k-\lambda+v_1),
\end{aligned}
\\[0.3em]
&\begin{aligned}
F_k &= \rho_1+\rho_2-3k^2+2k\lambda-2k(v_1+v_2)+\lambda^2-v_1v_2,\\
F_{\lambda} &= -\rho_1-\rho_2+k^2+2k\lambda-3\lambda^2+2(v_1+v_2)\lambda-v_1v_2.
\end{aligned}
\end{aligned}
\end{equation}
\ese
We obtain from them, by direct computation, the following key algebraic identity:
\begin{equation}
\label{e:keyidentity}
\big(F_{\rho_1},F_{\rho_2},F_{v_1},F_{v_2}\big)\,M
=
(\lambda-k)\big(F_{\rho_1},F_{\rho_2},F_{v_1},F_{v_2}\big)
+\big(-F_{\lambda},-F_{\lambda},F,F\big).
\end{equation}
The identity is exact, and, in particular, at a branch point where \(F=F_{\lambda}=0\), it reduces to
\begin{equation}
\big(F_{\rho_1},F_{\rho_2},F_{v_1},F_{v_2}\big)M
=
(\lambda_j-k_j)\big(F_{\rho_1},F_{\rho_2},F_{v_1},F_{v_2}\big).
\end{equation}

Since \(F(k_j,\lambda_j;\@y)=0\) and \(F_{\lambda}(k_j,\lambda_j;\@y)=0\), implicit differentiation of \(F=0\) gives
\begin{equation}
\partial_{\rho_1} k_j=-\frac{F_{\rho_1}}{F_k},
\qquad
\partial_{\rho_2} k_j=-\frac{F_{\rho_2}}{F_k},
\qquad
\partial_{v_1} k_j=-\frac{F_{v_1}}{F_k},
\qquad
\partial_{v_2} k_j=-\frac{F_{v_2}}{F_k},
\end{equation}
with all of the above quantities evaluated at \((k,\lambda)=(k_j,\lambda_j)\). 
Thus,
\begin{equation}
\nabla_{\@y} k_j =
    -\frac{1}{F_k(k_j,\lambda_j;\@y)}
        \big(F_{\rho_1},F_{\rho_2},F_{v_1},F_{v_2}\big)\Big|_{(k_j,\lambda_j)}.
\end{equation}
Using the exact identity~\eqref{e:keyidentity} and evaluating at the branch point, we obtain
\begin{equation}
\nabla_{\@y} k_j\,M(\@y) =
    (\lambda_j-k_j)\,\nabla_{\@y} k_j.
\end{equation}
Thus, \(\nabla_{\@y} k_j\) is a left eigenvector of \(M(\@y)\), with eigenvalue $c_j$ given by~\eqref{e:cjdef}.

Now differentiate \(k_j(\@y(x,t))\) along solutions of~\eqref{e:DisplessManakovMatrixForm}.
By the chain rule we have 
$(k_j)_t=\nabla_{\@y} k_j\cdot \@y_t$ and
$(k_j)_x=\nabla_{\@y} k_j\cdot \@y_x$.
Using~\eqref{e:DisplessManakovMatrixForm}, we then obtain
\begin{equation}
(k_j)_t = - \nabla_{\@y} k_j\,M(\@y)\,\@y_x = - c_j\,\nabla_{\@y} k_j\,\@y_x = - c_j (k_j)_x\,,
\end{equation}
with $c_j$ given by~\eqref{e:cjdef}.
Hence one obtains~\eqref{e:ManakovDispersionlessDiagonal}, which proves that 
each simple branch point \(k_j\) is indeed a local Riemann invariant.
\end{proof}

\begin{corollary}
\label{c:riemanncoordinates}
On the generic strictly hyperbolic set, the gradients
$\nabla_{\@y}k_1$,
$\nabla_{\@y}k_2$,
$\nabla_{\@y}k_3$ and 
$\nabla_{\@y}k_4$
are linearly independent. Hence $(k_1,k_2,k_3,k_4)$ form a local system of Riemann coordinates for the dispersionless Manakov system.
\end{corollary}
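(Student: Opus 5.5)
The plan is to combine Theorem~\ref{t:branchpoints} with strict hyperbolicity. On the strictly hyperbolic set, $M(\@y)$ has four distinct real eigenvalues $c_1,\dots,c_4$. Left eigenvectors belonging to distinct eigenvalues are linearly independent, since $M$ is then diagonalizable and its left eigenvectors form the dual basis of its right eigenvectors. Theorem~\ref{t:branchpoints} says that $\nabla_{\@y}k_j$ is a left eigenvector of $M$ with eigenvalue $c_j=\lambda_j-k_j$. So it suffices to show two things at generic points: each gradient $\nabla_{\@y}k_j$ is nonzero, and the four speeds $\lambda_j-k_j$ are pairwise distinct, i.e.\ they are exactly the four roots of the characteristic quartic $P(c;\@y)$ and not a repeated subset of them. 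Once this holds, the inverse function theorem shows that $\@y\mapsto(k_1,\dots,k_4)$ is a local diffeomorphism. By~\eqref{e:ManakovDispersionlessDiagonal}, the system becomes diagonal in these coordinates, so they are Riemann coordinates.

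The first step, nonvanishing of the gradient, will come from the explicit formulas. By the implicit-differentiation formula in the proof of Theorem~\ref{t:branchpoints}, $\nabla_{\@y}k_j=0$ would require $F_{\rho_1}=F_{\rho_2}=0$, i.e.\ $k_j-\lambda_j+v_2=k_j-\lambda_j+v_1=0$. This forces $w=v_1-v_2=0$, which is excluded by the standing genericity assumption $w\neq0$. Hence every gradient is a nonzero left eigenvector, and its eigenvalue $c_j$ is a root of $P$.

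The second step, distinctness of the $c_j$, is the main obstacle. I would argue it by continuity and a single explicit check. The map $j\mapsto c_j$ sends the four simple branch points into the four roots of $P$. Both the branch points and the roots of $P$ depend analytically on $\@y$ on the open set where $\Delta$ has simple roots and $P$ has distinct roots. If two of the $c_j$ coincided, the corresponding gradients would be parallel (nonzero left eigenvectors of a simple eigenvalue). Then $k_j$ and $k_l$ would be functionally dependent, so $\Delta$ would have a repeated-root-type algebraic relation. To rule this out cleanly, I would evaluate at one convenient point, e.g.\ $v=0$, $s_3=0$, with $\rho$ and $w$ chosen so that all quantities are simple. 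There I would compute the four $k_j$, the corresponding $\lambda_j$ from $F=F_\lambda=0$, and hence the $c_j$, and check that they are distinct. Alternatively, I would verify $\det[\nabla k_1,\dots,\nabla k_4]\neq0$ there directly. This determinant is analytic (up to sign conventions for labeling the roots), so it is nonzero off a proper analytic subvariety. This gives independence on the generic strictly hyperbolic set.

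A cleaner alternative for the second step, which I would try first, is to make the correspondence bijective algebraically. Eliminate $k$ between $F=0$ and $F_\lambda=0$ with $\lambda=k+c$, and show that the resultant in $c$ is a nonzero multiple of $P(c;\@y)$. The discriminant identity promised in section~\ref{s:modulationalstability} suggests this is the case. Then distinct roots of $P$ correspond one-to-one to the simple branch points, which gives distinctness immediately.
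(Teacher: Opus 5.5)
Your proposal is correct and follows the paper's argument (Theorem~\ref{t:branchpoints} plus linear independence of left eigenvectors for distinct eigenvalues), but it is more careful: the paper tacitly assumes that the gradients are nonzero and that $j\mapsto c_j=\lambda_j-k_j$ hits each root of $P$ exactly once, and you justify the first via $F_{\rho_1}=F_{\rho_2}=0\Rightarrow w=0$. Your preferred resultant route does settle the second on the whole strictly hyperbolic set with $w\neq0$, so the one-point generic check, which would only give an open dense set, is unnecessary: with $\lambda=k+c$, both $F$ and $F_\lambda$ are linear in $k$, their leading coefficients $-2(v_1-c)(v_2-c)$ and $2(v_1+v_2-2c)$ cannot both vanish, and their resultant in $k$ is exactly $2P(c;\mathbf{y})$, so each of the four distinct roots of $P$ determines a unique branch pair, and these pairs have distinct $k$ because a cubic in $\lambda$ has at most one double root.
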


\begin{proof}
By Theorem~\ref{t:branchpoints}, $\nabla_{\@y}k_1$, $\nabla_{\@y}k_2$, $\nabla_{\@y}k_3$, and $\nabla_{\@y}k_4$ are left eigenvectors of $M(\@y)$ associated, respectively, with the characteristic speeds $c_1$, $c_2$, $c_3$, and $c_4$. On the strictly hyperbolic set these four characteristic speeds are distinct, so the corresponding left eigenvectors are linearly independent.
\end{proof}

\begin{remark}
The nondegeneracy condition $F_k(k_j,\lambda_j;\@y)\ne0$ is needed for the implicit-function calculation of $\nabla_{\mathbf y}k_j$. 
When branch points collide, the branch-point coordinates cease to provide four independent local coordinates. 
The diagonal equations obtained above remain meaningful by continuation from a region of simple ramification, but the repeated branch point must be treated as a limiting coordinate. 
The non-strictly-hyperbolic sector of the system is discussed in Lemma~\ref{l:Delta=0}.
\end{remark}

\begin{remark}
The Riemann invariants produced by Theorem~\ref{t:branchpoints} are \textit{local}.
The four roots $k_1,\dots,k_4$ of $\Delta(k,\@y)=0$ are not globally single-valued functions 
of~$\@y$.
They are the branches of an algebraic function, and any labeling of them is only 
consistent on a simply connected domain of the $\@y$~space that avoids the discriminant locus of
$\Delta$.  
Continuation of a labeled quadruple $(k_1,\dots,k_4)$ around a loop encircling that locus 
generically permutes the branches.
Equivalently, as shown in section~\ref{s:inversion}, the inverse map $\@k\mapsto\@y$ is 
multivalued, and its determination requires a branch-selection criterion.
Accordingly, the diagonal form~\eqref{e:ManakovDispersionlessDiagonal} should be understood as 
holding on any region of the $(x,t)$ plane whose image under $\@y$ stays within such a domain.
This is the same state of affairs as for the scalar defocusing NLS equation, and it is sufficient
for the local analysis of simple waves, rarefactions and the characterization of hyperbolicity.
It does, however, mean that global constructions such as the hodograph transform or the solution of
Riemann problems require the branch structure to be tracked explicitly.
\end{remark}

\section{Recovering the physical variables from the Riemann invariants}
\label{s:inversion}

The Riemann invariants $k_1,\dots,k_4$ in terms of the physical variables $\@y = (\rho_1,\rho_2,v_1,v_2)^\t$ are given by the roots of the quartic equation $\Delta(k,\@y)=0$.
Similarly, the characteristic speeds $c_1,\dots,c_4$ are given in terms of $\@y$ by the roots of the
quartic equation~$P(c;\@y)=0$.
In this section, we turn to the so-called inversion problem, namely, we address the question of how one can invert the map $\@y 
\mapsto\@k = (k_1,\dots,k_4)^\t$ and express the physical variables and the characteristic speeds in terms of the Riemann invariants.


To do so, it is convenient to perform the change of dependent variables from $\@y$ to
$\@z = (\rho,s_3,v,w)^\t$, with $\rho$, $s_3$, $v$ and $w$ as in~\eqref{e:zfromy},
which is inverted by~\eqref{e:yfromz}.
Recall that the branch points \(k_1,\dots,k_4\) 
are the four roots of the branch-point quartic \(\Delta(k,\@y)=0\).
For $w\ne0$, since $a_4=16w^2\ne0$, they are equivalently the roots of the monic quartic
\be
\widetilde\Delta(k;\@y):=\frac{\Delta(k,\@y)}{a_4}.
\ee

Since $\widetilde\Delta$ is monic, Vieta's formulas give
\be
\widetilde\Delta(k;\@y) = \prod_{j=1}^4 (k-k_j)
= k^4-S_1k^3+S_2k^2-S_3k+S_4,
\label{e:Delta_S1S4}
\ee
where $S_1(\@k),\dots,S_4(\@k)$ are the symmetric invariants:
\bse
\begin{equation}
\begin{aligned}
S_1(\@k) := k_1+k_2+k_3+k_4
&,\qquad
S_2(\@k) := k_1k_2+k_1k_3+k_1k_4+k_2k_3+k_2k_4+k_3k_4,
\\
S_3(\@k) := k_1k_2k_3+k_1k_2k_4+k_1k_3k_4+k_2k_3k_4
&,\qquad
S_4(\@k) := k_1k_2k_3k_4.
\end{aligned}
\end{equation}
\ese
Comparing the coefficients of the powers of $k$ in~\eqref{e:Delta_S1S4} with those in~\eqref{e:Deltadef}
yields the four equations 
\begin{gather}
\label{e:inversionequations}
S_1(\@k) = - a_3/a_4\,,\qquad 
S_2(\@k) = a_2/a_4\,,\qquad
S_3(\@k) = - a_1/a_4\,,\qquad
S_4(\@k) = a_0/a_4\,,
\end{gather}
with $a_0,\dots,a_4$ as in \eqref{e:acoeffs}.

The task is therefore to solve~\eqref{e:inversionequations} for $\rho$, $s_3$, $v$ and $w$ in terms of $S_1,\dots,S_4$. Here we first fix the Galilean frame and pass to the symmetric frame $v=0$, thereby reducing the inversion problem to the variables $(\rho,s_3,w)$. In this frame, we eliminate $\rho$ and $s_3$ successively and reduce the problem to the determination of $w^2$ from the symmetric invariants $S_1,\dots,S_4$. The full inversion from spectral data in the laboratory frame must also determine the unknown Galilean shift $v$; that problem is treated separately in~\cite{AB2026}.

To start, recall that a Galilean boost, as discussed in Section~\ref{s:Laxpair},
implies $v_j\mapsto v_j+a$, and hence $v=v_1+v_2\mapsto v+2a$. 
It also implies that each branch point transforms as $k_j\mapsto k_j-a/2$. 
Choosing $a=-v/2$ reduces the physical variables to the symmetric frame $v=0$ and translates each branch point by $v/4$. 
We will use this symmetric frame in the remainder of the section, and henceforth $k_j$ and $S_j$ will denote the branch points and their symmetric invariants in this frame.

When $v=0$, the system of equations~\eqref{e:inversionequations} simplifies considerably to the following:%
\begin{equation}
\label{e:inversion-v0}
\begin{aligned}
S_1 = \frac{\rho s_3}{w}
&,\qquad
S_2 = \frac{2\rho^2-10\rho w^2-w^4}{8w^2},
\\
S_3+\frac{3\rho s_3(6\rho+3w^2)}{16w}=0
&,\qquad
S_4 = \frac{(w^2-4\rho)^3-108\rho^2s_3^2w^2}{256w^2}.
\end{aligned}
\end{equation}

The calculations that follow are carried out on the generic branch $w\ne0$ and $s_3\ne0$, for which $S_1=\rho s_3/w\ne0$. 
The zero-counterflow and balanced-density branches require separate limiting formulas and are not covered by the algebra below.
Note that the first of these equations is linear in any of the three remaining dependent variables,
and can therefore be easily solved for each variable.

It appears that the most convenient choice is to solve for~\(\rho\), obtaining
\be
\label{e:rho}
\rho = S_1 w/s_3\,.
\ee
Substituting \eqref{e:rho} into the remaining three equations then yields
\bse
\begin{align}
S_1^2 = s_3^2\left(4S_2+\half w^2\right)+5s_3S_1w
&,
\label{e:2b}
\\
S_1w\left(s_3w+2S_1\right)+\frac{16}{9}s_3S_3 = 0
&,
\label{e:2c}
\\
S_4 = \frac{1}{256}\left[w\left(w-\frac{4S_1}{s_3}\right)^3-108S_1^2w^2\right]
&.
\label{e:2d}
\end{align}
\ese
Again, note that \eqref{e:2c} is linear in~$s_3$.
We therefore continue by solving \eqref{e:2c} for $s_3$, obtaining
\be
s_3 = -\frac{18 S_1^2 w}{16 S_3 + 9 S_1 w^2}\,.
\label{e:s3explicit}
\ee
Finally, substituting the expression \eqref{e:s3explicit} into~\eqref{e:2b} and~\eqref{e:2d} yields respectively
\bse
\begin{gather}
729 S_1^2 w^4+\left(1728 S_1 S_3-1296 S_1^2 S_2\right) w^2+256 S_3^2 = 0\,,
\label{e:3b}
\\
19683 S_1^3 w^6+\left(69984 S_1^2 S_3-78732 S_1^5\right) w^4+\left(82944 S_1 S_3^2-186624 S_1^3 S_4\right) w^2+32768 S_3^3 = 0\,.
\label{e:3d}
\end{gather}
\ese
Equation~\eqref{e:3d} is cubic in $w^2$, while~\eqref{e:3b} is quadratic in $w^2$. For $S_1\ne0$, the two values determined by~\eqref{e:3b} are
\be
\label{e:wroots}
w_\pm^2=
\frac{8}{27S_1}
\left[
3S_1S_2-4S_3
\pm
\sqrt{3(S_1S_2-2S_3)(3S_1S_2-2S_3)}
\right].
\ee

The two candidate values in~\eqref{e:wroots} admit a simple interpretation. Suppose that the symmetric invariants $S_1,S_2,S_3$ arise from a state $(\rho,s_3,w_0)$ in the symmetric frame. Using the first three relations in~\eqref{e:inversion-v0} to express $S_1,S_2,S_3$ in terms of $(\rho,s_3,w_0)$, 
Equation~\eqref{e:3b} factors as
\be
\frac{81\rho^2s_3^2}{w_0^4}(w^2-w_0^2)\left(9w_0^2w^2-(2\rho+w_0^2)^2\right)=0.
\label{e:w2factorization}
\ee
Thus, the two candidate values are 
\vspace*{-1ex}
\be
 w^2  = w_0^2,\qquad
 w^2  = \frac{(2\rho+w_0^2)^2}{9w_0^2},
\label{e:w2candidates}
\ee
 only the first of which recovers the original value $w_0^2$. 

Returning to the inversion problem, each real nonnegative value among the two candidates $w_\pm^2$ in~\eqref{e:wroots} gives the two signed candidate values $w=\pm[w_\pm^2]^{1/2}$. These values  must still satisfy the remaining equation~\eqref{e:3d}, which is the constraint imposed by the symmetric reduction $v=0$. At this point, a natural question is which of the resulting roots is physically relevant. As we discuss next, two distinct issues arise in this regard.

The first issue is related to an exact residual symmetry.  
A direct computation shows that, when $v=0$, the branch-point quartic $\Delta(k,\@y)$ 
in~\eqref{e:Deltadef} is invariant under the map
\be
(\rho,s_3,w)\;\mapsto\;(\rho,-s_3,-w)\,.
\label{e:residualsymmetry}
\ee
This is simply the statement that interchanging the labels of the two components 
(which reverses the sign of both the polarization imbalance and the counterflow) leaves the 
spectral curve unchanged.
Consequently, the spectral data cannot distinguish between the two sets of values in~\eqref{e:residualsymmetry}.  
Thus, 
any inversion is necessarily subject to this two-fold ambiguity.
This ambiguity accounts for the sign of $w$ associated with a fixed value of $w^2$, and is resolved by any convention fixing the labeling of the two components, e.g.\ $w\ge0$.

The second issue is genuine root selection, and is resolved by physical admissibility
constraints.
Recall that, by definition, $\rho>0$, $s_3\in[-1,1]$ and $w\in\Real$.
Given the symmetric functions $S_1,\dots,S_4$, each candidate root~$w$ of~\eqref{e:3b} determines
$s_3$ and~$\rho$ uniquely through~\eqref{e:s3explicit} and~\eqref{e:rho}.
We therefore adopt the following selection rule: 

\begin{definition}
\label{d:admissible}
We say that a root $w$ of~\eqref{e:3b} is~\textit{admissible} if and only if 
\be
w\in\Real\setminus\{0\}\,,\qquad
16 S_3 + 9 S_1 w^2 \ne 0\,,\qquad
\rho = S_1 w/s_3 > 0\,,\qquad
s_3\in[-1,1]\,,
\label{e:selectionrule}
\ee
where $s_3$ is given by~\eqref{e:s3explicit} and $\rho$ by~\eqref{e:rho}, together with the requirement that the resulting triple $(\rho,s_3,w)$ also 
satisfy the remaining constraint~\eqref{e:3d}.
\end{definition}

\begin{proposition}
\label{p:inversionuniqueness}
If $S_1,\dots,S_4$ arise from a physical state $(\rho,s_3,w_0)$ in the symmetric frame with $w_0s_3\ne0$. the admissible inverse state is unique up to the residual symmetry~\eqref{e:residualsymmetry}.
\end{proposition}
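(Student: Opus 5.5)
The plan is to use the factorization~\eqref{e:w2factorization}, which shows that any root $w$ of~\eqref{e:3b} must satisfy $w^2=w_0^2$ or $w^2=w_*^2:=(2\rho+w_0^2)^2/(9w_0^2)$. Fix $S_1,\dots,S_4$ coming from the physical state $(\rho,s_3,w_0)$. This gives $S_1=\rho s_3/w_0\ne0$, so~\eqref{e:3b} is a genuine quadratic in $w^2$ and the factorization applies. Any admissible $w$ is real and nonzero, so it lies in $\{\pm w_0,\pm w_*\}$. The argument then has three steps.

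\textbf{Step 1: the candidates $w=\pm w_0$.} Take $w=w_0$. Substituting $S_1,S_3$ from~\eqref{e:inversion-v0} gives $16S_3+9S_1w_0^2=-3\rho s_3(6\rho+3w_0^2)/w_0+9\rho s_3w_0=-18\rho^2s_3/w_0\ne0$. Then~\eqref{e:s3explicit} returns $s_3$, and~\eqref{e:rho} returns $\rho$. By construction the triple satisfies~\eqref{e:3d}, so it is admissible.

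For $w=-w_0$, the expression~\eqref{e:s3explicit} is odd in $w$ and returns $-s_3$. Then~\eqref{e:rho} gives $\rho=S_1(-w_0)/(-s_3)=\rho$. The result is exactly the image of the original state under the residual symmetry~\eqref{e:residualsymmetry}. Since $\Delta$ is invariant under that map, this triple also satisfies~\eqref{e:3d} and is admissible. These two candidates account for the stated ambiguity.

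\textbf{Step 2: the spurious candidates $w=\pm w_*$.} If $w_*^2=w_0^2$, there is nothing new. Otherwise I would show that $w_*$ fails admissibility. I would compute the induced $s_3$ from~\eqref{e:s3explicit}, written in terms of $(\rho,s_3,w_0)$:
\[
16S_3+9S_1w_*^2=\frac{\rho s_3}{w_0}\Bigl(-3(6\rho+3w_0^2)+\frac{(2\rho+w_0^2)^2}{w_0^2}\Bigr)=\frac{\rho s_3(2\rho-w_0^2)^2}{w_0^3}\cdot\frac{1}{1},
\]
where I expect the bracket to simplify to $(4\rho^2-14\rho w_0^2-8w_0^4)/w_0^2$ or a similar expression; the exact form still has to be checked. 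From this one gets
\[
\tilde s_3=-\frac{18S_1^2w_*}{16S_3+9S_1w_*^2},\qquad \tilde\rho=\frac{S_1w_*}{\tilde s_3}=-\frac{16S_3+9S_1w_*^2}{18S_1}.
\]
The important observation is that $\tilde\rho$ depends only on $w_*^2$. So $\tilde\rho>0$ is a sign condition that does not depend on which sign of $w_*$ is chosen.

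\textbf{Step 3: eliminating $w_*$ via~\eqref{e:3d}.} I would substitute $(\tilde\rho,\tilde s_3,w_*)$ into~\eqref{e:3d} and factor the resulting polynomial in $\rho$ and $w_0^2$. I expect it to be a nonzero multiple of a factor such as $(w_*^2-w_0^2)$ times a factor with definite sign, so that it can vanish only when $w_*^2=w_0^2$. This is the case already covered in Step 1.

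\textbf{Main obstacle.} The difficulty lies in Step 3. One must show that the cubic constraint~\eqref{e:3d} has no common root with the spurious branch of~\eqref{e:3b} except at the coincidence locus. Equivalently, the resultant of~\eqref{e:3b} and~\eqref{e:3d} in $w^2$, restricted to the spurious factor $9w_0^2w^2-(2\rho+w_0^2)^2$, must be nonvanishing for physical data.

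My first attempt is to evaluate the cubic~\eqref{e:3d} directly at $w^2=w_*^2$ using computer algebra, and to factor the result. If some residual factor fails to have a definite sign, I would use the admissibility inequalities $\tilde\rho>0$ and $|\tilde s_3|\le1$ to exclude the remaining region. A useful fact here is that $|\tilde s_3|=\tilde\rho^{-1}|S_1w_*|$. If this approach works, every admissible root lies in $\{\pm w_0\}$, and the proposition follows.
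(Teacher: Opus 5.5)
Your reduction to the candidates $\pm w_0,\pm w_*$ is right, and Step 1 is correct; it even makes explicit the recovery of the original state, which the paper leaves implicit. The gap is in Step 3. The structure you expect there does not occur, and the argument that actually excludes $w_*$ is not supplied.

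Carrying out your computation gives $16S_3+9S_1w_*^2=2\rho s_3(2\rho+w_0^2)(\rho-4w_0^2)/w_0^3$. Your guessed bracket is correct; the display with $(2\rho-w_0^2)^2$ is not. Hence $\tilde s_3=3\rho s_3/(4w_0^2-\rho)$ and $\tilde\rho=(4w_0^2-\rho)(2\rho+w_0^2)/(9w_0^2)$. On this triple, \eqref{e:3d} is equivalent to $\tilde S_4=S_4$, where
\[
\tilde S_4-S_4=\frac{(w_0^2-\rho)(\rho+2w_0^2)\bigl(9\rho^2s_3^2-(\rho+2w_0^2)^2\bigr)}{48\,w_0^4}.
\]
The first two factors are indeed a multiple of $w_0^2-w_*^2$, as you predicted. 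The last factor, however, has no definite sign: it vanishes on the physical hypersurface $3\rho|s_3|=\rho+2w_0^2$. For example, take $(\rho,s_3,w_0)=(1,\tfrac23,\tfrac1{\sqrt2})$. Then $w_*^2=\tfrac{25}{18}\ne w_0^2$, and the triple $(\tilde\rho,\tilde s_3,w_*)=(\tfrac59,2,\tfrac{5}{3\sqrt2})$ reproduces all four invariants $S_1,\dots,S_4$ with $\tilde\rho>0$. So your claim that the resultant restricted to the spurious factor is nonvanishing for physical data is false, and \eqref{e:3d} cannot rule out $w_*$ by itself.

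What closes the proof is the inequality argument you mention only as a fallback, and it needs one ingredient you did not list. On the hypersurface $3\rho|s_3|=\rho+2w_0^2$, admissibility of the spurious triple gives two conditions:
\begin{itemize}
\item $\tilde\rho>0$ gives $4w_0^2-\rho>0$;
\item $|\tilde s_3|\le1$ then gives $3\rho|s_3|\le4w_0^2-\rho$.
\end{itemize}
Together with the hypersurface equation, these give $\rho+2w_0^2\le4w_0^2-\rho$, i.e.\ $w_0^2\ge\rho$. The missing ingredient is the physical bound $|s_3|\le1$ on the \emph{original} state. It gives $\rho+2w_0^2\le3\rho$, i.e.\ $w_0^2\le\rho$.

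Therefore $w_0^2=\rho$, where $w_*^2=w_0^2$ and no new state appears. Off that hypersurface, the factorization above shows directly that \eqref{e:3d} forces $w_0^2=\rho$. In the example above, it is exactly this bound that eliminates the spurious state with $\tilde s_3=2$. Without it, Steps 2--3 do not prove the proposition.
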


\begin{proof}
Suppose that a signed root corresponding to the second candidate  for $w^2$  in~\eqref{e:w2candidates} is admissible in the sense of Definition~\ref{d:admissible}, and let
\vspace*{-1ex}
\be
\widehat w=\frac{2\rho+w_0^2}{3w_0}.
\ee
[\unskip The opposite sign is accounted for by the residual symmetry~\eqref{e:residualsymmetry}.]\ 
Substituting $\widehat w$ into~\eqref{e:s3explicit} and~\eqref{e:rho} gives
\be
\widehat s_3=\frac{3\rho s_3}{4w_0^2-\rho},
\qquad
\widehat\rho=\frac{(4w_0^2-\rho)(2\rho+w_0^2)}{9w_0^2}.
\label{e:secondbranchreconstruction}
\ee
The admissibility conditions in Definition~\ref{d:admissible} require, in particular, $\widehat\rho>0$ and $|\widehat s_3|\le1$. Since $\rho>0$ and $w_0\ne0$, the first and second conditions imply, respectively 
\be
4w_0^2-\rho>0,\qquad
3\rho|s_3|\le4w_0^2-\rho.
\label{e:secondbranchadmissibility}
\ee

It remains to impose the remaining constraint~\eqref{e:3d}, equivalently the fourth symmetric invariant. Substituting the second of~\eqref{e:secondbranchreconstruction} into the expression for $S_4$ yields
\be
\widehat S_4-S_4=
\frac{(w_0^2-\rho)(\rho+2w_0^2)\big(9\rho^2s_3^2-(\rho+2w_0^2)^2\big)}{48w_0^4}.
\label{e:S4secondbranch}
\ee
Since $\rho>0$ and $w_0\ne0$, compatibility requires either
$w_0^2=\rho$ or $3\rho|s_3|=\rho+2w_0^2.$
In the latter case, combining~\eqref{e:secondbranchadmissibility} with
$3\rho|s_3|=\rho+2w_0^2$ gives
\[
\rho+2w_0^2\le4w_0^2-\rho,
\]
and hence $w_0^2\ge\rho$. On the other hand, the physical bound $|s_3|\le1$ gives
\[
\rho+2w_0^2=3\rho|s_3|\le3\rho,
\]
and hence $w_0^2\le\rho$. Therefore $w_0^2=\rho$. At this value, the two candidates in~\eqref{e:w2candidates} coincide:
 $(2\rho+w_0^2)^2/(9w_0^2)=w_0^2$. 
Thus, the second algebraic branch cannot produce a distinct admissible physical state.
\end{proof}

\section{Modulational stability and instability of plane wave solutions}
\label{s:modulationalstability}

Complex characteristic speeds of the dispersionless system produce exponential growth at leading order in the long-wave limit of the linearized problem~\cite{CourantHilbertII,Lax1957,Whitham1974}, which we will refer to as baseband modulational instability as in~\cite{Baronio2014}. 
Conversely, real characteristic speeds  imply linear stability with respect to sufficiently long-wave perturbations. 
When real characteristic speeds coincide, however, the dispersionless system is degenerate at Whitham order, and their reality alone does not determine the higher-order behavior of the full linearized spectrum.
The availability of an analytical characterization for the Riemann invariants and characteristic speeds of the dispersionless Manakov system allows us to perform a study of baseband modulational stability of the underlying plane wave solutions of the Manakov system, a task that we turn to next.
Importantly, however, note that baseband stability does not rule out the possibility of unstable bands at nonzero perturbation wavenumbers.
The calculation of the full Fourier spectrum will then be carried out in Section~\ref{s:linearization}.

Thanks to the Galilean invariance of the Manakov system, without loss of generality we can again limit ourselves to studying the 
symmetric case $v=0$, i.e., $v_1 = -v_2 = \half w$.
In this case, the characteristic polynomial becomes $P(c,\@y) = p(c,\@z(\@y))$, with
\be
p(c,\@z) = c^4 - (\rho + \half w^2)\,c^2 - \rho w s_3\,c + \tfrac1{16} w^2(w^2 - 4\rho)\,.
\label{e:p4reduced}
\ee
(Note that $p(c,\@z)$ does not contain $v$, consistent with the assumption $v=0$.)
The key question now is whether the roots $c_1,\dots,c_4$ of $p(c,\@z)$ are real or complex.
The reality or complexity of the roots of $p(c,\@z)$ determine whether sufficiently long-wave perturbations are stable or unstable. 
Next we discuss the case of distinct roots.
The case of repeated roots will be considered separately below.

The expression~\eqref{e:p4reduced} for $p(c,\@z)$ is simple enough that it allows for an analytical study.
First of all, note that, in the reduction to a single component, i.e., when $\rho_2=0$ or $\rho_1=0$ (which is obtained when $s_3 = \pm1$, respectively), the results agree with those for the defocusing NLS equation.
Indeed, for $s_3=\pm1$, the characteristic speeds are
\[
c=-\tfrac12s_3w \quad\text{(with multiplicity two)},
\qquad
c=\tfrac12s_3w\pm\sqrt\rho.
\]
Similarly, it is easy to check that there is no baseband instability when no counterflow is present i.e., when $w=0$.
Indeed, in this case one obtains $c=0$ (which is again a double root) and $c=\pm\sqrt\rho$,
which also coincides with the limit $w=0$ of the values obtained when $s_3=\pm1$.

To analyze the general case, we can use the results of \cite{Rees}, conveniently reformulated for our purposes in Lemma~2.3 of~\cite{Stevic}:

\begin{lemma}[Rees, 1922]
\label{l:Stevic}
Consider the quartic polynomial
\bse
\begin{gather}
p_4(y)=y^4+by^3+cy^2+dy+e,
\label{e:p4def}
\\
\noalign{\noindent and let}
    \D = \tfrac1{27}(4\D_0^3-\D_1^2),\qquad 
    \D_0 = c^2-3bd+12e,\qquad
    \D_1 = 2c^3-9bcd+27b^2e+27d^2-72ce,
\\
    \P = 8c-3b^2,\qquad 
    \Q = b^3+8d-4bc,\qquad 
    \R = 64e-16c^2+16b^2c-16bd-3b^4\,.
\end{gather}
\ese
\begin{itemize}
\item[(a)] 
If $\D<0$, then two zeros of $p_4$ are real and distinct, while two are complex conjugates.
\item[(b)] 
If $\D>0$, then all the zeros of $p_4$ are real or none is.  Specifically:
    \begin{itemize}
    \item[(b1)] 
    if $\P<0$ and $\R<0$, then all four zeros of $p_4$ are real and different;
    \item[(b2)] 
    if $\P>0$ or $\R>0$, then there are two pairs of complex conjugate zeros of $p_4$.
    \end{itemize}
\item[(c)] 
If $\D=0$, then and only then $p_4$ has a multiple zero. The following cases can occur:
    \begin{itemize}
    \item[(c1)] 
    if $\P<0$, $\R<0$ and $\D_0\ne 0$, then two zeros of $p_4$ are real and equal and two are real and simple;
    \item[(c2)] 
    if $\R>0$ or ($\P>0$ and ($\R\ne 0$ or $\Q\ne 0$)), then two zeros of $p_4$ are real and equal and two are complex conjugate;
    \item[(c3)] 
    if $\D_0=0$ and $\R\ne 0$, there is a triple zero of $p_4$ and one simple, all real;
    \item[(c4)] if $\R=0$, then
        \begin{itemize}
        \item[(i)] 
        if $\P<0$ there are two double real zeros of $p_4$;
        \item[(ii)] 
        if $\P>0$ and $\Q=0$ there are two double complex conjugate zeros of $p_4$;
        \item[(iii)] 
        if $\D_0=0$, then all four zeros of $p_4$ are real and equal to $-b/4$.
        \end{itemize}
    \end{itemize}
\end{itemize}
\end{lemma}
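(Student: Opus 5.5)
The plan is to reduce to a depressed quartic and read everything off from the factorization of $p_4$ over $\Real$. The substitution $y=x-b/4$ turns $p_4$ into
\[
x^4+\tfrac{\P}{8}x^2+\tfrac{\Q}{8}x+\tfrac{\R-\P^2}{256},
\]
which is a direct check. So in depressed variables we have $p=\P/8$ and $q=\Q/8$, and $\R=64(4r-p^2)$ up to the check of constants, with $r$ the depressed constant term. The quantity $\D$ is, up to a positive constant, the classical discriminant $\prod_{i<j}(y_i-y_j)^2$; it is translation invariant, and this can be verified by expanding in the roots. The invariant $\D_0$ is the quadratic invariant, which vanishes whenever there is a root of multiplicity at least three.

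For parts (a) and (b), I would argue from the sign of $\prod_{i<j}(y_i-y_j)^2$ for a real polynomial. The root configurations are: four real roots; two real roots and one conjugate pair $\alpha\pm\ii\beta$; or two conjugate pairs. Factors from conjugate pairs of differences multiply to positive numbers. The only possibly negative factor is $(2\ii\beta)^2=-4\beta^2$. With exactly one complex pair this gives $\D<0$; with zero or two pairs it gives $\D>0$, assuming distinct roots. This proves (a) and the first sentence of (b).

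To separate (b1) from (b2), I would write the depressed quartic as
\[
(x^2+ux+s)(x^2-ux+t),
\]
with real $u,s,t$, pairing conjugate roots in the complex case. Then $p=s+t-u^2$ and $r=st$, so
\[
p^2-4r=(s-t)^2-2u^2(s+t)+u^4 .
\]
When all four roots are real, I would instead use the root expressions directly. Since $\sum x_i=0$, we get $p=-\tfrac12\sum x_i^2<0$. I would then show $p^2>4r$ by writing $p^2-4r$ as a positive combination of squared root differences, using the resolvent-cubic roots $(x_1+x_2)(x_3+x_4)$ etc. When both factors have complex roots, we have $u^2<4s$ and $u^2<4t$. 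I would show that $p<0$ and $p^2-4r>0$ cannot hold simultaneously, which forces $\P>0$ or $\R>0$. Since (b1) and (b2) are then exhaustive and mutually exclusive, the lemma's "if" statements follow.

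Part (c) is a case enumeration over multiplicity patterns: double plus two simple (real or complex), triple, two doubles (real or conjugate), and quadruple. In each case I would substitute explicit roots, e.g.\ $x=a$ (double) with $x=-a\pm\gamma$ or $-a\pm\ii\gamma$. I would then evaluate the signs of $\P$, $\R$, $\Q$ and $\D_0$ and match them to (c1)--(c4). For instance, $\D_0=0$ characterizes the triple case, while $\R=0$ together with $\Q=0$ characterizes the biquadratic-square cases.

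I expect the main obstacle to be the sign analysis in (b), namely proving that two complex pairs always give $\P>0$ or $\R>0$. My first attempt would be to minimize $p^2-4r$ subject to $p\le0$ in the $(u,s,t)$ parametrization, using $s,t>u^2/4$. Since $p\le 0$ requires $u^2\ge s+t$, substituting shows the expression is positive. If this becomes messy, I would fall back on citing Rees~\cite{Rees} and Lemma~2.3 of~\cite{Stevic} directly, since the lemma is quoted from there.
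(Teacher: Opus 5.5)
The paper gives no proof of this lemma. It imports it from Rees (1922), in the reformulation of Lemma~2.3 of Stevi\'c, so your outline is a genuinely different, self-contained route. It is the classical argument, and it is sound:
\begin{itemize}
\item $\D$ equals $\prod_{i<j}(y_i-y_j)^2$ exactly in the monic case. This gives (a), the first sentence of (b), and the ``then and only then'' of (c).
\item A real factorization into two quadratics separates (b1) from (b2).
\item Part (c) follows by enumerating multiplicity patterns. In depressed variables the most convenient form is a real double root $a$ times a cofactor $x^2+2ax+m$, together with the square $(x^2+\beta)^2$, $\beta>0$.
\end{itemize}
Your route gives a verification independent of the literature. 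It also shows that under $\D>0$ the sign conditions of (b1) and (b2) are in fact exhaustive. The citation buys only brevity.

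Several details need fixing when you write it out.
\begin{itemize}
\item \textbf{Depressed constant term.} It is $(\P^2+4\R)/256$, not $(\R-\P^2)/256$; equivalently $\R=16(4r-p^2)$. Check with $b=c=d=0$: then $\R=64e$ and $r=e$. Your two stated formulas contradict each other. The first would not give $\operatorname{sign}\R=\operatorname{sign}(4r-p^2)$, which is all you use downstream.
\item \textbf{Four distinct real roots.} The identity you want is $p^2-4r=z_1z_2+z_1z_3+z_2z_3$, with $z_1=(x_1+x_2)(x_3+x_4)=-(x_1+x_2)^2$ and similarly for the other pairings. These are not squared differences. However, distinctness allows at most one $z_i$ to vanish, so $p^2-4r>0$.
\item \textbf{Two complex pairs.} Your last sentence has the sign backwards: you must show $p\le0\Rightarrow p^2-4r<0$, so you should be maximizing, not minimizing. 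It follows in one line from $p^2-4r=\bigl(u^2-(\sqrt s-\sqrt t)^2\bigr)\bigl(u^2-(\sqrt s+\sqrt t)^2\bigr)$. Indeed, $s,t>u^2/4\ge0$ and $p\le0$ give $(\sqrt s-\sqrt t)^2<s+t\le u^2<4\min(s,t)\le(\sqrt s+\sqrt t)^2$. The upper bound $u^2<4\min(s,t)$ is essential, since $p^2-4r>0$ for large $u^2$. This argument even yields the strict ``$\P>0$ or $\R>0$''.
\item \textbf{Part (c), converse for $\D_0$.} In the parametrization above, $\D_0=(m+3a^2)^2$ and $\R=-16(m-a^2)(m-9a^2)$. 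The formula for $\D_0$ supplies the converse of your remark: under $\D=0$, $\D_0=0$ only for multiplicity $\ge3$.
\item \textbf{Part (c), the case $m=9a^2$.} The formula for $\R$ exposes an easily missed case: $\R=0$, $\P>0$, $\Q\ne0$. This is a real double root together with the pair $-a\pm2\sqrt2\,\ii a$. It is why (c2) contains ``$\P>0$ and $\Q\ne0$'', and why (c4)(ii) needs $\Q=0$.
\item \textbf{Part (c4)(i).} At $4r=p^2$ the discriminant reduces to $q^2(32p^3-27q^2)$. Hence $\D=0$ together with $\P<0$ forces $\Q=0$.
\end{itemize}
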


To avoid confusion, we should point out that, for simplicity, in Lemma~\ref{l:Stevic} 
the names of the coefficients of $p_4(y)$ follow the notation of \cite{Stevic},
so the quantity $c$ there should not be confused with the variable with the same name used earlier in this work.
Conversely, the variable~$y$ in Lemma~\ref{l:Stevic} should be identified with the parameter $c$ in \eqref{e:p4reduced}.
[Note also that $\D$ is the discriminant of $p_4(y)$.]

Specifically, when the polynomial $p_4(y)$ in~\eqref{e:p4def} is identified with $p(c,\@z)$ in~\eqref{e:p4reduced}, the relevant quantities in Lemma~\ref{l:Stevic} are as follows:
\bse
\begin{gather}
\D = (1 - s_3^2) w^2 \rho^2 \big(27 \rho^2 s_3^2 w^2 + (w^2-4 \rho)(\rho + 2w^2)^2 \big)\,,
\label{e:Discriminant2}
\\
\P=-4(w^2+2\rho),
\qquad
\R=-16\rho(\rho+2w^2),
\qquad
\D_0=(w^2-\rho)^2.
\label{e:PD_Delta0}
\end{gather}
\ese

\begin{proposition}
\label{p:branch-char-discriminants}
On the generic set $\rho>0$, $w\ne0$, and $|s_3|<1$, two branch points collide if and only if two characteristic speeds coincide.
\end{proposition}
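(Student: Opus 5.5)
The plan is to prove an exact polynomial identity between the two discriminants, namely $\operatorname{disc}_k\Delta(k,\@y)$ (the discriminant of the branch-point quartic) and the discriminant $\D$ of the characteristic-speed quartic $p(c,\@z)$ in~\eqref{e:Discriminant2}. By Galilean invariance we work in the symmetric frame $v=0$. A boost only translates every $k_j$ by the same amount and every $c_j$ by the same amount, so neither branch-point collisions nor speed coincidences depend on the frame. In this frame $\Delta$ has the coefficients $a_4,\dots,a_0$ in~\eqref{e:acoeffs} with $v=0$, and $\operatorname{disc}_k\Delta$ is a polynomial in $(\rho,s_3,w)$. The claim is that it factors as
\[
\operatorname{disc}_k\Delta = C\,w^{m}\rho^{n}(1-s_3^2)^{r}\,\Big(27\rho^2s_3^2w^2+(w^2-4\rho)(\rho+2w^2)^2\Big)^{\ell},
\]
with $C$ a nonzero numerical constant and $m,n,r\ge0$, $\ell\ge1$ integers. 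This is the analogue of the discriminant identity announced in the introduction. Since the prefactor $w^m\rho^n(1-s_3^2)^r$ is nonzero on the generic set $\rho>0$, $w\ne0$, $|s_3|<1$, the identity gives $\operatorname{disc}_k\Delta=0\iff\D=0$. Lemma~\ref{l:Stevic}(c) states that $\D=0$ holds exactly when $p$ has a multiple root, and $\operatorname{disc}_k\Delta=0$ holds exactly when two roots of $\Delta$ coincide (recall $a_4=16w^2\ne0$). Together these prove the proposition.

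The steps are as follows.
\begin{enumerate}
\item Specialize $a_0,\dots,a_4$ to $v=0$.
\item Compute the quartic discriminant of $\Delta$ symbolically, for example through the invariants $\D_0,\D_1$ of Lemma~\ref{l:Stevic} applied to the monic $\widetilde\Delta$, using $27\,\mathrm{disc}=4\D_0^3-\D_1^2$.
\item Factor the result over $\mathbb{Q}[\rho,s_3,w]$ and read off the factor shared with~\eqref{e:Discriminant2}.
\end{enumerate}

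As a consistency check on the expected factor, Theorem~\ref{t:branchpoints} provides a conceptual reason. The left eigenvectors $\nabla k_j$ have eigenvalues $c_j=\lambda_j-k_j$, so whenever the $k_j$ are simple, the map $k_j\mapsto c_j$ assigns four speeds, and these exhaust the spectrum of $M$ by Corollary~\ref{c:riemanncoordinates}. A collision $k_i=k_j$ forces $\lambda_i=\lambda_j$, hence $c_i=c_j$. For the converse, $c_i=c_j$ with $k_i\ne k_j$ would give two independent left eigenvectors for a repeated eigenvalue. The direct algebraic identity avoids settling whether that situation can occur.

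The main obstacle is the symbolic factorization itself. The discriminant of a quartic whose coefficients are cubic in $\rho$ has high degree, and the extra factors must be identified correctly. In particular we must check that no additional irreducible factor appears that could vanish somewhere on the generic set, since such a factor would produce branch-point collisions without speed coincidences. If the raw output is unwieldy, the first thing to try is to exploit homogeneity: with $\rho$ of weight $2$ and $w,k,c$ of weight $1$, we can set $\rho=1$. The residual symmetry~\eqref{e:residualsymmetry} then lets us work in the variables $s_3^2$ and $w^2$ where possible, which makes the factorization tractable. Any leftover factor would then be checked for sign-definiteness on $\rho>0$, $|s_3|<1$, for example by writing it as a sum of positive terms.
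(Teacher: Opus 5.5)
Your proposal is correct and follows the same route as the paper. In the symmetric frame $v=0$ the paper computes $\operatorname{disc}_k\Delta=4096\,\rho^2w^2(1-s_3^2)\,\mathcal B^3$ with $\mathcal B=27\rho^2s_3^2w^2+(w^2-4\rho)(\rho+2w^2)^2$. This is your ansatz with $C=4096$, $m=n=2$, $r=1$, $\ell=3$ and no extra factor, and the paper combines it with $\D=\rho^2w^2(1-s_3^2)\mathcal B$ to obtain $\operatorname{disc}_k\Delta=4096\,\mathcal B^2\D$. You also justify the reduction to $v=0$ explicitly (a boost leaves $\rho,s_3,w$ unchanged and shifts all $k_j$ and all $c_j$ uniformly), which the paper's proof uses without stating.
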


\begin{proof}
Let $\Delta(k,\@y)$ denote the branch-point quartic in~\eqref{e:Deltadef}, and let $\D$ denote the discriminant of the characteristic polynomial~\eqref{e:p4reduced}. In the symmetric frame $v=0$,
\be
\operatorname{disc}_k\Delta
=
4096\,\mathcal B^2\,\D,
\label{e:branch-char-discriminant-identity}
\ee
where
\be
\mathcal B = 27\rho^2s_3^2w^2+(w^2-4\rho)(\rho+2w^2)^2.
\label{e:branch-char-factor}
\ee
A direct calculation of the discriminant of the quartic~\eqref{e:Deltadef} with respect to $k$ gives
\be
\operatorname{disc}_k\Delta = 4096\rho^2w^2(1-s_3^2)\mathcal B^3.
\ee
On the other hand,~\eqref{e:Discriminant2} gives
\be
\D=\rho^2w^2(1-s_3^2)\mathcal B.
\ee
Combining the two identities yields~\eqref{e:branch-char-discriminant-identity}.
\end{proof}

In what follows, we study the sign of these quantities depending on the relative values of $\rho$, $s_3$ and $w$ and use this analysis to identify the regions where the characteristic speeds are all real or where some of them are complex.
Note that the locus where $\D=0$ is simply the boundary between these regions.  
We will first limit ourselves to studying cases~(a) and~(b) in Lemma~\ref{l:Stevic},
ignoring case~(c).  
Afterwards, in Lemma~\ref{l:Delta=0} we will characterize the characteristic speeds when $\D=0$.

Note first that: (i) $\P$ and $\R$ are independent of $s_3$, and (ii) $\P$ and $\R$ are always negative
(except in the trivial case $\rho=0$).  
So the analysis reduces simply to the study of the discriminant. 
Namely:
\vspace*{-1ex}
\begin{enumerate}
\advance\itemsep-4pt
\item[(a)] 
If $\D>0$, there are four real characteristic speeds;
\item[(b)]
Conversely, if $\D<0$, there are two real speeds and two complex conjugate speeds.
\end{enumerate}

\begin{figure}[b!]
\centering
\includegraphics[width=0.85\textwidth]{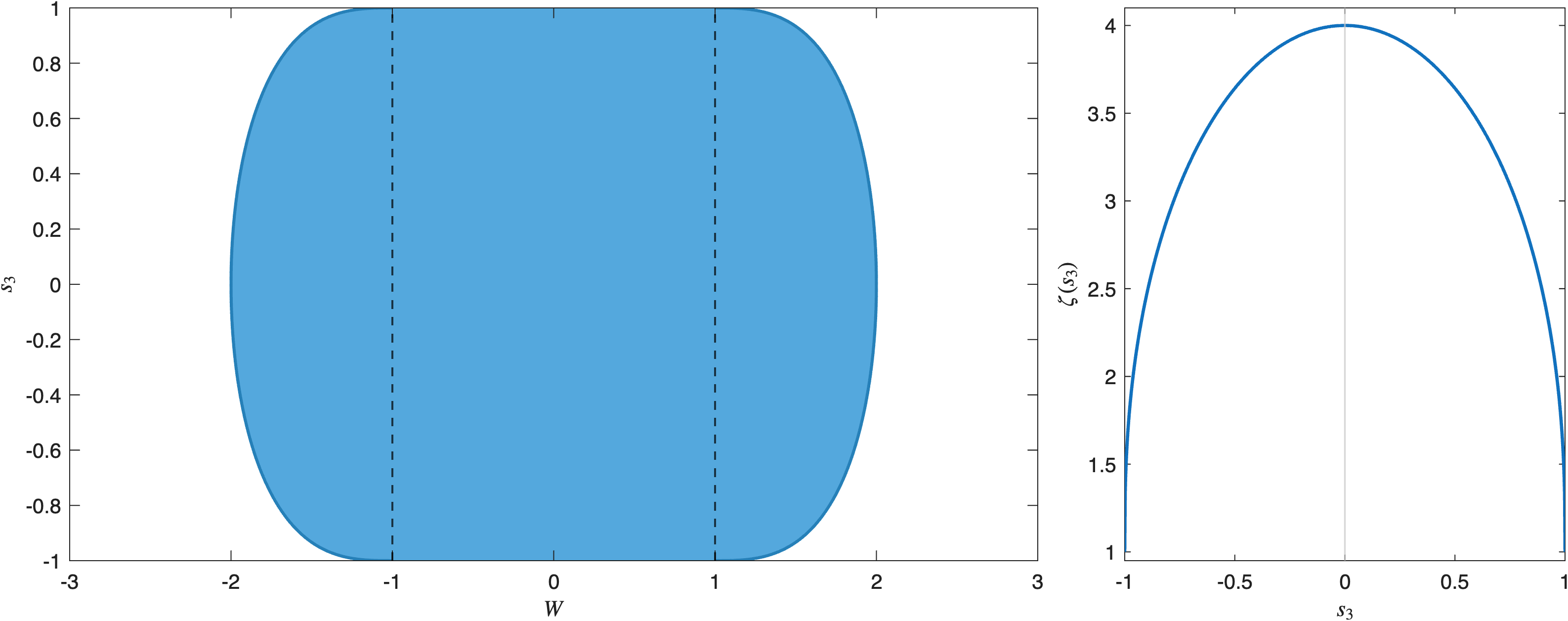}
\caption{Left panel: The region (in blue) of the $Ws_3$-plane (with $W$ as the horizontal axis and~$s_3$ in the vertical axis) for which the plane wave solutions of the defocusing Manakov system are baseband modulationally unstable at leading order in the long-wave limit.  
Right panel: the function $\zeta(s_3)$ (vertical axis) defined in equation~\eqref{e:y(s)} as a function of $s_3$ (horizontal axis).}
\label{f:Ws3instabilitydomain}
\end{figure}

Next, note that one can obtain a complete classification by observing that $\rho$ can be completely rescaled away, as one could expect owing to the scaling invariance of the Manakov system.
Indeed, the change of variables $c = \sqrt\rho \,C$ and $w = \sqrt\rho\,W$ yields
$p(c,\@z) = \rho^2 g(C,W,s_3)$ and $\D = \rho^6 (1-s_3^2) W^2 d(W,s_3)$, with
\bse
\begin{gather}
g(C,W,s_3) = C^4-\half C^2(W^2+2) - s_3 C W + \tfrac1{16} W^2 (W^2-4)\,,\\
d(W,s_3) = 4W^6-12W^4+3(9s_3^2-5)W^2-4 = (W^2-4)(2W^2+1)^2+27s_3^2W^2\,.
\end{gather}
\ese
The domain of the $Ws_3$-plane where $d(W,s_3)<0$ 
(shown as the blue region in Fig.~\ref{f:Ws3instabilitydomain})
identifies the parameter values where the plane wave solutions of the defocusing Manakov system are  modulationally unstable at leading order in the long-wave limit.  
Specifically, we have:

\begin{lemma}
\label{l:unstable}
The counterpropagating plane waves are baseband unstable when
\[
s_3\in(-1,1),
\qquad
0<|w|<2\sqrt\rho,
\qquad
|s_3|<\min\bigg\{\sqrt{f\left(w/\sqrt\rho\right)},1\bigg\},
\]
where
\be
f(W)=\frac{(4-W^2)(2W^2+1)^2}{27W^2}.
\label{e:fWdef}
\ee
The zero-counterflow state $w=0$ is baseband stable.
\end{lemma}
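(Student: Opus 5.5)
Since baseband instability means that some characteristic speed is complex, we will read the lemma off the sign classification just derived from Lemma~\ref{l:Stevic}. On the generic set $\rho>0$ we have $\P<0$ and $\R<0$ identically, so whenever $\D\ne0$ instability is equivalent to $\D<0$. Using the rescaling $w=\sqrt\rho\,W$, this is
\[
\D=\rho^6(1-s_3^2)W^2\,d(W,s_3)<0 .
\]

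For $|s_3|<1$ and $W\ne0$ the prefactor $\rho^6(1-s_3^2)W^2$ is strictly positive. The condition therefore reduces to $d(W,s_3)<0$, that is,
\[
27s_3^2W^2<(4-W^2)(2W^2+1)^2 .
\]
The left side is nonnegative, so this forces $W^2<4$, i.e.\ $0<|w|<2\sqrt\rho$. Conversely, when $0<W^2<4$ we may divide by $27W^2>0$, and the inequality becomes $s_3^2<f(W)$, with $f$ as in~\eqref{e:fWdef}. Together with $|s_3|<1$ this gives $|s_3|<\min\{\sqrt{f(W)},1\}$. Each step is reversible, so inside the generic set these conditions are exactly the unstable region.

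Three boundary and degenerate cases are handled separately.
\begin{itemize}
\item When $|s_3|=1$ we have $\D=0$. The explicit speeds listed earlier ($-\tfrac12s_3w$ double, $\tfrac12s_3w\pm\sqrt\rho$) are real, so there is no instability, consistent with excluding $s_3=\pm1$.
\item When $w=0$ the speeds are $0$ (double) and $\pm\sqrt\rho$, all real, so the zero-counterflow state is baseband stable.
\item On the curve $d=0$ itself, Lemma~\ref{l:Stevic}(c) with $\P,\R<0$ gives cases (c1), (c3) or (c4)(i), all of which have real roots. This curve is therefore not in the open unstable set. Its finer classification is deferred to Lemma~\ref{l:Delta=0}.
\end{itemize}

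The only step needing care is to confirm the factored form of $d$. Expanding $(W^2-4)(2W^2+1)^2$ gives $4W^6-12W^4-15W^2-4$, which matches the polynomial form of $d$ once the $27s_3^2W^2$ term is added. The main obstacle is making sure the strict/non-strict boundaries and the degenerate cases $\D=0$ are treated consistently rather than silently absorbed into the generic argument.
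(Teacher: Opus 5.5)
Your proposal is correct and follows the paper's route: since $\P,\R<0$, the sign of $\D=\rho^6(1-s_3^2)W^2d(W,s_3)$ decides instability, and $d=27W^2\,(s_3^2-f(W))$ for $W\ne0$ gives exactly the stated region, while $w=0$ is settled by the explicit real speeds $0,0,\pm\sqrt\rho$. Your handling of the boundary and degenerate cases is more careful than the paper's terse proof, though case (c4)(i) can never occur here because $\R\ne0$, so only (c1) and (c3) are relevant on $d=0$.
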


\begin{proof}
The threshold for instability is given by the curve $d(W,s_3)=0$, which yields 
$s_3 = \pm \sqrt{f(W)}$, with $f(W)$ as above.
The instability threshold in the original variables is easily obtained by simply recalling that $W = w/\sqrt\rho$.
\end{proof}

We point out that the baseband criterion in Lemma~\ref{l:unstable} is equivalent to the condition obtained in~\cite{Baronio2014} by Baronio et al. In their notation, the defocusing baseband-instability condition is
\begin{equation*}
(a_1^2+a_2^2)^3
-12(a_1^4-7a_1^2a_2^2+a_2^4)q^2
+48(a_1^2+a_2^2)q^4
-64q^6>0.
\end{equation*}
Using the change of variables  
$\rho=a_1^2+a_2^2$,
$s_3=(a_1^2-a_2^2)/\rho$ and 
$q=\half w=\half\sqrt{\rho}\,W$, 
one has
$a_1^4-7a_1^2a_2^2+a_2^4=\frac14\rho^2(9s_3^2-5)$,
and the above inequality becomes
$\frac14\rho^3\,d(W,s_3)<0$.
Hence, the condition of~\cite{Baronio2014} is exactly $d(W,s_3)<0$, the criterion obtained here from the characteristic-speed discriminant of the dispersionless Manakov system.

Also note that $f(W)>1$ for $0<|W|<1$, with $f(\pm1)=1$. 
Thus, for all $W\in(-1,1)$ [i.e., in the portion in Fig.~\ref{f:Ws3instabilitydomain} between the two vertical dashed lines], all values of $s_3\in(-1,1)$ correspond to baseband-unstable solutions.
Indeed, an expansion of the roots $C_1,\dots,C_4$ around $W=0$ yields
$C_{1,2} = \pm 1+\half Ws_3 + O(W^2)$ and 
$C_{3,4} = \half ( - s_3 \pm i \sqrt{1-s_3^2} ) W+O(W^2)$.
Conversely, an expansion around $W=\infty$ 
yields
$C_{1,2} = -\half W  \pm\sqrt{(1-s_3)/2} + O(1/W)$ and 
$C_{3,4} = \half W \pm\sqrt{(1+s_3)/2} + O(1/W)$
(with a possible reordering of the roots),
showing that no asymptotic baseband instability exists.
Indeed, it is easy to see from~\eqref{e:fWdef} that no baseband instability is present when $|W|>2$,
since $f(W)<0$ for $|W|>2$. 

For a Fourier mode $\e^{\ii\xi(x-ct)}$, the temporal growth rate is $|\xi\,\Im c|$. 
Since $c=\sqrt\rho\,C$, we will therefore call
\begin{equation}
I(W,s_3):=|\Im C|
\end{equation}
the dimensionless baseband growth coefficient.

\begin{lemma}
\label{l:growthrate}
The maximum dimensionless baseband growth coefficient over the unstable region is attained at
\[
s_3=0,
\qquad
|W|=\sqrt{3/2},
\]
and equals $I_{\max}=1/(2\sqrt2)$. For each fixed $W$, $I(W,s_3)$ is strictly decreasing as a function of $|s_3|$ throughout the unstable region.
\end{lemma}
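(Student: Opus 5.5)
We would work throughout with the rescaled quartic $g(C,W,s_3)=C^4-\half C^2(W^2+2)-s_3CW+\tfrac1{16}W^2(W^2-4)$, whose roots are the dimensionless characteristic speeds. By Lemma~\ref{l:unstable} and the sign analysis preceding it ($\P,\R<0$, so $\D<0$ is the only unstable case), in the unstable region $g$ has exactly two real roots and one complex pair $a\pm\ii b$ with $b>0$, and $I=b$. Two symmetries reduce the work. First, $g(-C,W,-s_3)=g(C,W,s_3)$. Second, $g$ is even in $W$ up to $s_3\mapsto-s_3$. Hence $I$ depends only on $(|W|,|s_3|)$, and we may take $W>0$, $s_3\ge0$.

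\emph{The symmetric slice $s_3=0$.} Here $g$ is biquadratic, with
\[
C^2=\tfrac14\bigl(W^2+2\pm2\sqrt{2W^2+1}\bigr).
\]
The discriminant $8W^2+4$ is positive, so both values of $C^2$ are real. A complex pair therefore arises only from a negative value of $C^2$, in which case $I^2=-C^2_-$. Substituting $u=\sqrt{2W^2+1}$, so that $W^2=(u^2-1)/2$, gives
\[
C_-^2=\tfrac18(u-1)(u-3).
\]
This is negative exactly for $1<u<3$, i.e.\ $0<W<2$, consistent with Lemma~\ref{l:unstable}. It is minimized at $u=2$, where $C_-^2=-1/8$. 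That is, $W^2=3/2$ and $I=1/(2\sqrt2)$. Once monotonicity in $|s_3|$ is established, the global maximum follows, since for every $W$ we have $I(W,s_3)\le I(W,0)\le 1/(2\sqrt2)$.

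\emph{Monotonicity in $|s_3|$.} Since the roots of $g$ sum to zero, we factor
\[
g=(C^2-2aC+m)(C^2+2aC+e),\qquad m=a^2+b^2 .
\]
Matching coefficients gives three relations:
\begin{align*}
e+m-4a^2&=-\tfrac12(W^2+2),\\
2a(m-e)&=-s_3W,\\
me&=\tfrac1{16}W^2(W^2-4).
\end{align*}
For fixed $W$, we use the first and third relations to write $e$ and $b^2$ in terms of $a$ alone. The first gives $e=-\tfrac12(W^2+2)+3a^2-b^2$. Substituting into the third yields a quadratic in $b^2$, and we take the root that reduces to $b^2=-C_-^2$ at $a=0$. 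The second relation then gives $s_3$ as an explicit function of $a$ along the unstable branch, with $a=0$ corresponding to $s_3=0$ (the small-$W$ expansion $C_{3,4}\approx\tfrac12(-s_3\pm\ii\sqrt{1-s_3^2})W$ fixes the sign convention $a s_3\le0$). The claim then reduces to two one-variable monotonicity statements on the branch: $|s_3|$ is strictly increasing in $|a|$, and $b^2$ is strictly decreasing in $|a|$. From these, $db^2/ds_3^2<0$.

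As a cross-check, and as a fallback, we would also do implicit differentiation on the complex root. Since $g_C(C)\,\partial_{s_3}C=WC$ and $g_C(C)=2\ii b\,(C-r_1)(C-r_2)$, where $r_{1,2}$ are the real roots, we get
\[
\partial_{s_3}b=\Im\frac{WC}{2\ii b\,(C-r_1)(C-r_2)} .
\]
The sign of this quantity would be determined using $r_1+r_2=-2a$ and $r_1r_2=e$. At the boundary $d(W,s_3)=0$ we have $b\to0$, which is consistent with decrease.

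\emph{Main obstacle.} The hardest step is establishing the sign in the monotonicity argument uniformly over the whole unstable region, including the part $|W|<1$ where the region extends to $|s_3|\to1$. We would first try to reduce it to the positivity of an explicit polynomial in $(a^2,W^2)$ on the admissible range, using $0<W<2$ and $b^2>0$, and then verify that positivity by a factorization or a sum-of-squares argument.
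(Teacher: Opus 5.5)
\textbf{Verdict: genuine gap.}

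\textbf{What matches the paper.} Your reduction is, after a change of variable, the one the paper uses. Your factor $C^2-2aC+m$ is the paper's Ferrari factor with $p=-2a$, so $P=p^2=4a^2$. Eliminating $e$ gives
\[
m-e=\sqrt{R(P)},\qquad
b^2=\tfrac14\bigl(P-(W^2+2)\bigr)+\tfrac12\sqrt{R(P)},\qquad
W^2s_3^2=P\,R(P),
\]
with $R(P)=P^2-(W^2+2)P+2W^2+1$. The first identity also gives your sign convention $as_3\le0$ for all $W>0$, not just small $W$. Your two one-variable claims are exactly the paper's Steps~1 and~2, and your $s_3=0$ computation is correct.

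\textbf{The gap.} Those two monotonicity claims are the entire content of the lemma, and you leave them unproved. Worse, your plan for proving them cannot work. You propose to establish the sign on the range cut out by $0<W<2$ and $b^2>0$, but on that set the claim ``$b^2$ strictly decreasing in $|a|$'' is false. Write $\zeta:=W^2\in(0,4)$. The set $\{P\ge0:\ b^2>0\}$ has a second component $P>\max\{\zeta,(4-\zeta)/3\}$ on which $b^2$ increases. For example, take $W^2=2$ and $4a^2=3$. Then $R=2$ and $b^2=\sqrt2/2-1/4>0$, yet
\[
\frac{db^2}{dP}=\frac{2P-\zeta-2+\sqrt R}{4\sqrt R}>0 .
\]
For the same reason, you never show that every unstable $(W,s_3)$ lies on the component through $a=0$.

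\textbf{What is missing.} You need to localize $P=4a^2$, and this requires the physical constraint $|s_3|<1$, which your plan never uses.
\begin{itemize}
\item From $\zeta(s_3^2-1)=P\,R(P)-\zeta=(P-\zeta)(P-1)^2$, the condition $|s_3|<1$ gives $P<\zeta$. The paper obtains the same bound from $h(0)\le0<h(\zeta)$ together with uniqueness of the real root of the resolvent cubic, whose discriminant $\zeta(1-s_3^2)d(W,s_3)$ is negative.
\item The spurious component corresponds to $s_3^2>1$; in the example above, $s_3^2=3$.
\item With $P<\zeta$, the condition $b^2>0$ becomes $(P-\zeta)(3P+\zeta-4)>0$, that is, $0\le P<\min\{\zeta,(4-\zeta)/3\}$.
\end{itemize}

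\textbf{How the two claims close on this range.}
\begin{itemize}
\item For the first claim, $\frac{d}{dP}[P\,R(P)]=(3P-2\zeta-1)(P-1)$, and both factors are negative on this range.
\item For the second claim, $db^2/dP<0$ reduces to $\zeta+2-2P>0$ together with $3P^2-3(\zeta+2)P+\zeta^2+2\zeta+3>0$. The latter holds because $P$ stays below the smaller root of that quadratic, which is what the paper verifies.
\end{itemize}
Without this localization step your argument does not close.
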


\begin{proof}
First, note that, since the coefficients of $g(C,W,s_3)$ in~\eqref{e:p4reduced} are real, complex roots occur in conjugate pairs.  
Moreover, the classical Ferrari resolvent-cubic factorization for a depressed quartic implies that 
the polynomial factors over $\Real$ as
\be
g(C,W,s_3) = (C^2+p\,C+r_+)(C^2-p\,C+r_-)\,,
\label{e:quarticfactorization}
\ee
for suitable real quantities $p,r_+,r_-$.
Explicitly, comparing the coefficients in~\eqref{e:quarticfactorization} with those in~\eqref{e:p4reduced} yields
\be
r_+ + r_- = p^2 - \tfrac12(W^2+2)\,,\qquad
p\,(r_--r_+) = -\,s_3 W\,,\qquad
r_+\,r_- = \tfrac1{16}W^2(W^2-4)\,.
\label{e:factorizationconditions}
\ee
Further, eliminating $r_\pm$ from~\eqref{e:factorizationconditions} shows that $P := p^2$ must be a
root of the resolvent cubic
\be
h(P;W,s_3) := P^3 - (W^2+2)\,P^2 + (2W^2+1)\,P - W^2 s_3^2\,.
\label{e:resolventcubic}
\ee
Next, writing $\zeta:=W^2$ and introducing the quantity
\be
R(P):=P^2-(\zeta+2)P+(2\zeta+1)\,, 
\label{e:Rdef}
\ee
the resolvent cubic takes the compact form 
\be
h(P;W,s_3)=P\,R(P)-\zeta\,s_3^2, 
\ee
so that, on its root locus, one has
\be
\zeta\,s_3^2 = P\,R(P)\,. 
\label{e:compactidentity}
\ee
In particular, one has $R(P)\ge0$ automatically whenever $P\ge0$, as it must be in order for $r_\pm$ to be real.
The baseband-unstable configurations are precisely those for which one of the two quadratic factors
in~\eqref{e:quarticfactorization} has negative discriminant, and the corresponding baseband growth coefficient satisfies $I^2 = r_+ - \tfrac14 P$ for that factor.

Finally, solving~\eqref{e:factorizationconditions} for $r_+$ and using~\eqref{e:resolventcubic} to eliminate
$s_3$ in favor of~$P$, one obtains the key representation
\be
I^2 = I_2(P;W) := \tfrac14\big(P - (W^2+2)\big) + \tfrac12\sqrt{P^2-(W^2+2)P+(2W^2+1)}\,,
\label{e:I2ofP}
\ee
in which the polarization imbalance $s_3$ no longer appears explicitly.
In other words, the entire
$s_3$-dependence of the baseband growth coefficient is carried by the single scalar~$P$.
Throughout the unstable region, $P$ is the unique real root of~\eqref{e:resolventcubic}. 
And since $I_2(P;W)$ depends on $s_3$ only through $P$, the chain rule gives
\be
\frac{\partial (I^2)}{\partial (s_3^2)} = \frac{dI_2}{dP}\,\frac{\partial P}{\partial (s_3^2)}\,.
\label{e:chainrule}
\ee
Thus, it suffices to pin down the sign of each of the two factors on the right-hand side.
We do so next in two separate steps.

\textit{Step 1: $P$ is strictly increasing with respect to $s_3^2$.}
Throughout the interior of the unstable set, the real root $P$ of~\eqref{e:resolventcubic} is unique and simple. Indeed, with $\zeta=W^2$, the discriminant of the resolvent cubic is
\[
\operatorname{disc}_P h
=
\zeta(1-s_3^2)d(W,s_3)<0,
\]
since $\zeta>0$, $|s_3|<1$, and $d(W,s_3)<0$ in the unstable region. Thus $h$ has exactly one real root. Since $h(P;W,s_3)$ is a monic cubic, it crosses from negative to positive at this root, and therefore $h'(P)>0$.
Implicit differentiation therefore gives
\be
\frac{\partial P}{\partial(s_3^2)}=\frac{W^2}{h'(P)}>0,
\label{e:dPds3}
\ee
as anticipated.

\textit{Step 2: $I_2(\,\cdot\,;W)$ is strictly decreasing on the unstable set.}
Letting $\zeta=W^2$ and $R(P)=P^2-(\zeta+2)P+(2\zeta+1)$ as before, we have 
\be
\frac{dI_2}{dP} = \frac{2P-(\zeta+2)+\sqrt{R(P)}}{4\sqrt{R(P)}}. 
\label{e:dPsidP}
\ee
It is straightforward to see that inequality $dI_2/dP<0$ is equivalent to requiring
\begin{gather}
\zeta+2-2P>0\,, 
\qquad
G(P;\zeta):=3P^2-3(\zeta+2)P+(\zeta^2+2\zeta+3)>0. 
\label{e:Gdef}
\end{gather}
Next, evaluating $h(P;W,s_3)$ at $P=0$ and $P=\zeta$, with $\zeta=W^2$, gives
\[
h(0;W,s_3)=-\zeta s_3^2\le0,
\qquad
h(\zeta;W,s_3)=\zeta(1-s_3^2)>0.
\]
Since $P$ is the unique real root of~\eqref{e:resolventcubic}, it follows that
\[
0\le P<\zeta.
\]
Thus, since $\zeta+2-P>0$, the condition $I^2>0$ in~\eqref{e:I2ofP} is equivalent to 
\be
4R(P)-(\zeta+2-P)^2=(P-\zeta)(3P+\zeta-4)>0. 
\label{e:unstablecharacterization}
\ee
The first factor is negative, and hence
$P<\zeta$ and $P<\tfrac13(4-\zeta)$. 
These bounds also imply $\zeta+2-2P>0$. The smaller root of $G(P;\zeta)$ is 
\be
P_-=\tfrac12(\zeta+2)-\tfrac16\sqrt{3\zeta(4-\zeta)}. 
\ee
Next, note that, for $0<\zeta<1$, 
\bse
\be
(1-\tfrac12 \zeta)^2-\tfrac1{12} \zeta(4-\zeta) = \tfrac13(\zeta-1)(\zeta-3)>0, 
\label{e:twoinequalities-a}
\ee
so $P_->\zeta>P$. 
Moreover, for $1<\zeta<4$, 
\be
(\tfrac56\zeta-\tfrac13)^2-\tfrac1{12}\zeta(4-\zeta) = \tfrac19(\zeta-1)(7\zeta-1)>0, 
\label{e:twoinequalities-b}
\ee
\ese
so $P_->(4-\zeta)/3>P$. 
Equality holds at $\zeta=1$. 
Thus $P<P_-$ throughout the unstable set, $G(P;\zeta)>0$, and $dI_2/dP<0$, 
as anticipated.

In conclusion, combining Steps~1 and~2, 
at every point of the unstable region
one has
\be
\frac{\partial (I^2)}{\partial (s_3^2)} = \frac{dI_2}{dP}\,\frac{\partial P}{\partial (s_3^2)} < 0.
\label{e:globalmonotonicity}
\ee
Hence, for each fixed~$W$, the baseband growth coefficient $I(W,s_3)$ is a strictly decreasing function of $|s_3|$,
and its global maximum over $s_3\in(-1,1)$ is therefore attained when $s_3=0$.

The last remaining task is to find the maximum of the growth coefficient along the line $s_3=0$.
There, \eqref{e:p4reduced} is biquadratic and the rescaled characteristic speeds are simply
\be
C_{1,2} = \pm\tfrac12\sqrt{W^2 - 2\sqrt{2W^2+1} + 2}\,,\qquad
C_{3,4} = \pm\tfrac12\sqrt{W^2 + 2\sqrt{2W^2+1} + 2}\,,
\label{e:s3=0speeds}
\ee
up to a reordering.
Of these, $C_{3,4}$ are always real, while $C_{1,2}$ are purely imaginary precisely when
$W^2+2 < 2\sqrt{2W^2+1}$, i.e., for $0<W^2<4$.
On that interval,
\be
I(W,0)^2 = \tfrac14\big(2\sqrt{2W^2+1} - W^2 - 2\big)\,,
\label{e:Ionaxis}
\ee
The function $I(W,0)^2$ is even in $W$, and its derivative with respect to $W^2$ vanishes at $W^2=3/2$, where the second derivative is negative. Hence the two maximizers are $W=\pm\sqrt{3/2}$, and~\eqref{e:Ionaxis} gives $I_{\max}^2=1/8$, or $I_{\max}=1/(2\sqrt2)$.

Returning to the original variables, since $c=\sqrt{\rho}\,C$ and $w=\sqrt{\rho}\,W$, the corresponding dimensional maximum is
$\max|\Im c|=\sqrt{\tfrac18\rho}$,
which is attained at $s_3=0$ and $|w|=\sqrt{\tfrac32\rho}$.
\end{proof}

\begin{remark}
The strict monotonicity~\eqref{e:globalmonotonicity} can be checked independently at $s_3=0$ 
by a direct second-derivative computation.  
Tracking a single root $C(s_3)$ of $g(C,W,s_3)=0$ by implicit differentiation, and noting that the
transformation $(C,s_3)\mapsto(-C,-s_3)$ leaves~$g$ invariant, one has $I(W,s_3)=I(W,-s_3)$, so 
that $s_3=0$ is automatically a critical point; consistently, 
$C'(0) = -W/\big(2\sqrt{2W^2+1}\big)\in\Real$, so the root moves purely horizontally to first order 
and $(I^2)'(0)=0$.  
Carrying the expansion to second order yields
\be
\left.\frac{d^2(I^2)}{ds_3^2}\right|_{s_3=0}
  = \frac{W^2\big[\,2W^2+1-(W^2+2)\sqrt{2W^2+1}\,\big]}{2\,(2W^2+1)^2}\,,
\label{e:d2I2}
\ee
whose sign is that of the bracket $N := 2W^2+1-(W^2+2)\sqrt{2W^2+1}$.
Substituting $r=\sqrt{2W^2+1}\ge1$ (so that $W^2 = (r^2-1)/2$) gives 
$N = -\tfrac{r}{2}\big[(r-1)^2+2\big]<0$ for all $r\ge1$, in agreement 
with~\eqref{e:globalmonotonicity}.
\end{remark}

 \begin{figure}[t!]
\centering
\includegraphics[width=0.95\textwidth]{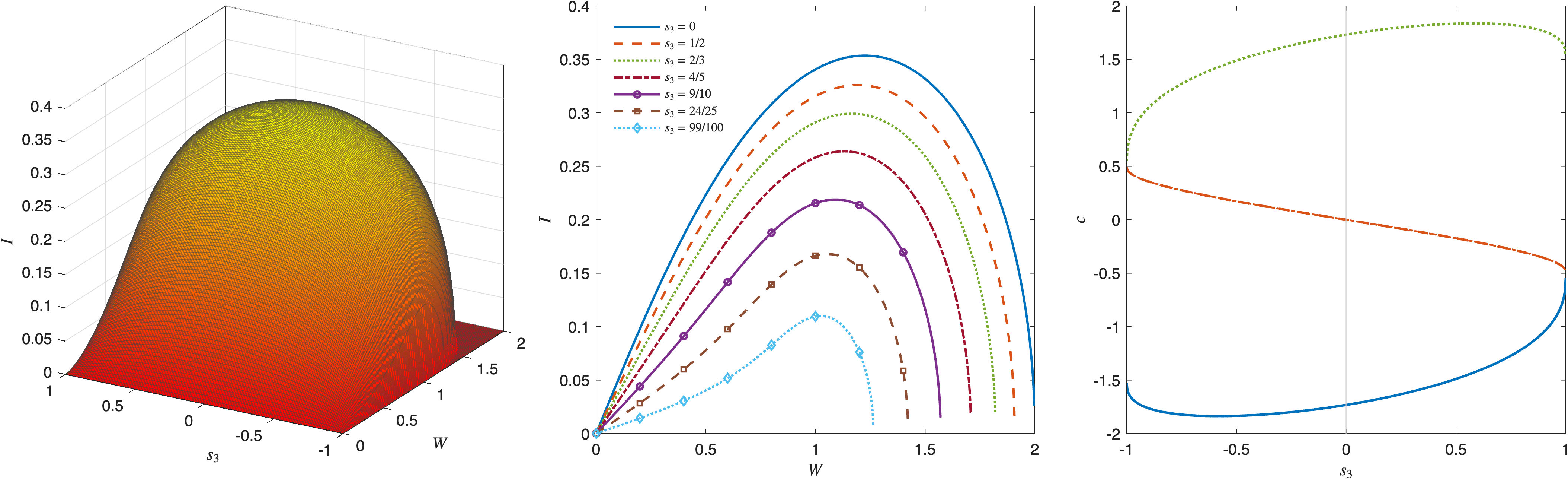}
\caption{Left panel: The baseband growth coefficient $I$ (vertical axis) as a function of $W$ and $s_3$, evaluated by numerically computing the characteristic speeds. Center panel: The baseband growth coefficient $I$ as a function of $W$ (horizontal axis) for $s_3=0$ (blue), 1/2 (orange), 2/3 (green), 4/5 (red), 9/10 (purple), 24/25 (brown) and 99/100 (cyan). Right panel: The characteristic speeds (vertical axis) as a function of $s_3$ (horizontal axis) in the non-strictly hyperbolic sector where two of them (shown by the orange curve) coincide.}
\label{f:growthrate}
\end{figure}

Figure~\ref{f:growthrate} shows the baseband growth coefficient as a function of both $W$ and $s_3$ (left panel) and as a function of $W$ for various values of $s_3$ (center panel).

It remains to characterize what happens when the discriminant $\D$ in \eqref{e:Discriminant2} vanishes.
As mentioned in section~\ref{s:modulationalstability}, two characteristic speeds coincide if and only if $\D=0$.
When this happens, the dispersionless Manakov system is not strictly hyperbolic (cf.~\cite{ElHoefer2016}).
Next we characterize the characteristic speeds on this degenerate locus.  We will show below that in this case the speeds  are all real, so the leading Whitham growth coefficient vanishes. Since at least one characteristic speed is repeated, however, this does not by itself imply spectral stability for nonzero perturbation wavenumber.

For $|s_3|<1$, the condition $\D=0$ contains the zero-counterflow branch $W=0$, for which baseband instability is not present. The remaining branches are determined by $d(W,s_3)=0$. 
Since $d(W,s_3)$ is even in $W$, it is convenient to set $\zeta=W^2$, which reduces this condition to a cubic equation in $\zeta$. Of its three roots, two are complex and therefore nonphysical, while the remaining real root is 
\be
\zeta(s) = \tfrac32 \sqrt[3]{(1-s_3)^2(1+s_3)} + \tfrac32 \sqrt[3]{(1-s_3)(1+s_3)^2}+1\,. 
\label{e:y(s)}
\ee
A plot of $\zeta(s)$ is shown in the right panel of Fig.~\ref{f:Ws3instabilitydomain}. 
Inserting \eqref{e:y(s)} into~\eqref{e:p4reduced} allows one to numerically find the characteristic speeds.
The resulting values, all real, are shown in the right panel of Fig.~\ref{f:growthrate}.
Of course a numerical evaluation does not rigorously prove the reality. 
On the other hand, we have the following:

\begin{lemma}
\label{l:Delta=0}
When the discriminant~$\D$ in~\eqref{e:Discriminant2} vanishes and $w^2\ne\rho$, one characteristic speed is real and double and the other two are real and simple. 
When $\D=0$ and $w^2=\rho$, one characteristic speed is real and triple and the remaining speed is real and simple.
\end{lemma}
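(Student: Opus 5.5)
We will apply Lemma~\ref{l:Stevic}, case~(c), to $p(c,\@z)$ in~\eqref{e:p4reduced}, using the quantities in~\eqref{e:PD_Delta0}. Assume $\rho>0$; this is the nontrivial case. Then $\P=-4(w^2+2\rho)<0$ and $\R=-16\rho(\rho+2w^2)<0$ identically. Since $\R\neq0$, case~(c4) is excluded. Since $\P<0$ and $\R<0$, case~(c2) is excluded as well. Hence, on the locus $\D=0$, only cases~(c1) and~(c3) can occur, and which one occurs is decided by whether $\D_0=(w^2-\rho)^2$ vanishes.

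If $w^2\ne\rho$, then $\D_0\ne0$. Together with $\P<0$ and $\R<0$, case~(c1) applies: one real double speed and two real simple speeds. If $w^2=\rho$, then $\D_0=0$ and $\R\ne0$, so case~(c3) applies: one real triple speed and one real simple speed. This is the whole argument, provided Lemma~\ref{l:Stevic} is taken as given.

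The step needing care is to check that the $\D=0$ branches are actually consistent with these conclusions. I would do this by exhibiting the repeated root explicitly, as a sanity check.

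\begin{itemize}
\item On the branch $s_3=\pm1$, the speeds listed earlier give the double root $-\tfrac12 s_3 w$.
\item On the branch $w=0$, the double root is $c=0$.
\item On the branch $w^2=\rho$, substitute into $d(W,s_3)=0$ with $W^2=1$. This gives $-9+27s_3^2=0$, so $|s_3|=1/\sqrt3$. With $\rho=w^2=1$, one then verifies that $p$ has a triple root. Matching the cubic term, the triple root is $c_0=-c_1/3$, where $c_1$ is the simple root. Consistency with the $c^2$ coefficient $-\tfrac32$ then forces $c_0^2=1/4$. This agrees with $\D_0=0$ and confirms that the triple root exists and is real.
\end{itemize}

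I expect the main obstacle to be degenerate sub-cases where the lemma's hypotheses interact. For example, at $w=0$ we have $w^2\ne\rho$, and the roots $0,0,\pm\sqrt\rho$ are consistent with~(c1). At $s_3=\pm1$ with $w^2=\rho$, one should check that the double root $-\tfrac12 s_3 w$ coincides with one of $\tfrac12 s_3w\pm\sqrt\rho$ to form a triple root. It does: $-\tfrac12 w$ equals $\tfrac12 w-\sqrt\rho$ exactly when $w=\sqrt\rho$, and the other sign is handled by the residual symmetry~\eqref{e:residualsymmetry}. So the case split on $\D_0$ is exhaustive and the statement follows.
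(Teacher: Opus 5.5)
Your main argument is the paper's proof and is correct. With $\rho>0$ one has $\mathcal{P}<0$ and $\mathcal{R}<0$, so only cases~(c1) and~(c3) of Lemma~\ref{l:Stevic} can occur, and $\mathcal{D}_0=(w^2-\rho)^2$ decides which. Stating $\rho>0$ explicitly is a useful addition: at $\rho=0$ one gets $p=(c^2-\tfrac14 w^2)^2$, and the conclusion fails.

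The third item of your sanity check is wrong, however, and should be removed or corrected. At $W^2=1$ one has
$d(\pm1,s_3)=(1-4)(2+1)^2+27s_3^2=27(s_3^2-1)$,
not $27s_3^2-9$. So $\mathcal{D}=0$ together with $w^2=\rho$ forces $|s_3|=1$, not $|s_3|=1/\sqrt3$.

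Your own Vieta computation shows this if you take it one step further. With $\rho=w=1$, the roots are $c_0,c_0,c_0,-3c_0$ with $c_0^2=\tfrac14$. The linear coefficient $-s_3$ of $p$ then requires $-8c_0^3=s_3$, i.e.\ $|s_3|=1$. At $|s_3|=1/\sqrt3$ there is no triple root; indeed $d(1,1/\sqrt3)=-18$, so $\mathcal{D}<0$ there.

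As written, that item claims a triple characteristic speed inside the two-component region $|s_3|<1$. This is false, and it contradicts your final paragraph, where you correctly find the triple root at $s_3=\pm1$, $w^2=\rho$. The lemma's proof is unaffected, since it rests only on Lemma~\ref{l:Stevic}. The correct picture, which the paper records right after the proof, is that the triple-root case occurs only at the scalar endpoints $|s_3|=1$.
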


\begin{proof}
Under the hypothesis $\D=0$, case~(c) of Lemma~\ref{l:Stevic} applies. Equation~\eqref{e:PD_Delta0} gives $\P<0$ and $\R<0$. If $w^2\ne\rho$, then $\D_0=(w^2-\rho)^2>0$, and case~(c1) gives one real double root and two real simple roots. If $w^2=\rho$, then $\D_0=0$ and $\R\ne0$, so case~(c3) gives one real triple root and one real simple root.
\end{proof}

The triple-root case in Lemma~\ref{l:Delta=0} occurs only at the scalar endpoints. 
Indeed, $w^2=\rho$ is equivalent to $W^2=1$, and
\[
d(\pm1,s_3)=27(s_3^2-1).
\]
Hence $\D=0$ and $w^2=\rho$ require $|s_3|=1$. In the genuinely two-component region $|s_3|<1$, the degenerate locus therefore consists only of one real double characteristic speed and two real simple characteristic speeds. Since $\mathcal K=W/\sqrt2$ in section~\ref{s:linearization}, the exceptional value $W^2=1$ corresponds to $\mathcal K^2=\half$; this is the same scalar-endpoint degeneracy excluded in Proposition~\ref{p:degenerate-longwave-instability} .

Thus, on the locus $\D=0$, the characteristic speeds are all real but are not all distinct. Accordingly, the dispersionless growth coefficient $I$ vanishes there, and the locus is marginal at Whitham order. The dispersionless calculation alone does not determine whether instability persists at higher order as the perturbation wavenumber tends to zero; this question requires the full linearized dispersion relation considered in section~\ref{s:linearization}.

It is worth pausing briefly to interpret the above results physically.
The mechanism identified by the analysis is unambiguous: the instability is driven entirely by the
counterflow~$w$ between the two components, and is suppressed by the polarization imbalance~$s_3$.
Indeed, no baseband instability is present when $w=0$ no matter what the polarization is, nor when 
$s_3=\pm1$ (i.e., in the scalar reduction) no matter what the counterflow is; and, by 
Lemma~\ref{l:growthrate}, at fixed counterflow the baseband growth coefficient decreases monotonically as the imbalance increases.
The instability is therefore a genuinely two-component effect: it requires both components to be
populated and to move relative to one another. 

This counterflow-driven mechanism is closely related to the counter-superflow instability studied for miscible two-component Bose--Einstein condensates. 
Refs.~\cite{PhysRevLett.106.065302,Law2001} demonstrated that two interacting condensates moving through one another become dynamically unstable once their relative velocity exceeds a critical value. The corresponding Bogoliubov instability and its nonlinear development were subsequently studied in detail in~\cite{TakeuchiIshinoTsubota2010,IshinoTsubotaTakeuchi2011}, where the counter-superflow instability was shown to generate momentum transfer between the two components and, in higher dimensions, vortex nucleation and binary quantum turbulence. The Manakov system considered here corresponds to the equal-coupling, miscibility-threshold limit and provides an integrable setting in which the counterflow instability can be characterized explicitly through the Whitham characteristic speeds and the full finite-wavenumber discriminant.
 
Complex characteristic speeds occur only for $0<|w|<2\sqrt\rho$ and sufficiently small imbalance. When $|w|>2\sqrt\rho$, all characteristic speeds are real, so sufficiently long-wave perturbations are stable. This result does not imply stability at every perturbation wavenumber, however. 
The direct calculation in section~\ref{s:linearization} shows that finite-wavenumber unstable bands can exist outside the baseband region, in agreement with the intermediate-wavelength instability described in~\cite{JNLS2000v10p291}. 
Large counterflow therefore does not decouple the two components into independent scalar equations.
This picture is consistent with, and complements, what is known in the closely related setting of
nonlinear polarization waves in two-component
condensates~\cite{KKLP2014,LPK2013,Kamchatnov2014,CongyKamchatnovPavloff2016},
where the polarization (counter-phase) mode is precisely the one
carrying the counterflow degree of freedom, and where the polarization sound speed --- unlike the
density sound speed --- can become imaginary.
It is also the natural point of contact with the direct linearized analysis 
of~\cite{JNLS2000v10p291}, which we revisit in section~\ref{s:linearization}, and where the
long-wavelength limit of the unstable band is shown to reproduce exactly the threshold obtained
here.
Experimental evidence for both baseband and passband polarization modulation instability in a defocusing Manakov fiber system was reported by Frisquet et al.~\cite{Frisquet2015}.

Finally, we note that in the recent experimental and theoretical study~\cite{PRL2025v135p113401} 
of a related, non-integrable two-component system, the two components were not permitted to
counterpropagate, so the mechanism described above could not operate; understanding to what extent
it survives away from the integrable, miscibility-threshold point is an interesting open question.
\section{Stability analysis via direct linearization of the Manakov system}
\label{s:linearization}

The analysis of section~\ref{s:modulationalstability} describes the limit of vanishing perturbation wavenumber. 
As mentioned earlier, Forest et al.~\cite{JNLS2000v10p291} derived the full linearized dispersion relation for counterpropagating plane waves and analyzed both long-wave and intermediate-wavelength instability. Here we write that dispersion relation in discriminant form over the full $(\mathcal K,\Xi,s_3)$ parameter space and compare its $\Xi\to0$ limit with section~\ref{s:modulationalstability}.

We again set $\eps=1$ in~\eqref{e:Manakov}.  We now also rescale time by letting $t=2\~t$. 
After dropping tildes, we obtain the Manakov system in the form used in this section as
\be
\ii\@q_t+\@q_{xx}-2(\@q^\dagger\@q)\@q=0.
\label{e:ManakovISTnormalization}
\ee
We begin by recalling that an exact plane wave solution of  equation~\eqref{e:ManakovISTnormalization} is given by
\be
\@q_o(x,t) = ( \,q_{o,1} \,,\, q_{o,2}\,)^\t\,,\qquad
q_{o,1}(x,t) = a_1\,\e^{\ii(\kappa x-\omega_o t)}\,,\qquad
q_{o,2}(x,t) = a_2\,\e^{-\ii(\kappa x+\omega_o t)}\,, 
\label{e:qOIC}
\ee
with $\omega_o(\kappa,\@a) = \kappa^2 + 2\|\@a\|^2$,
where $\@a = (a_1,a_2)^\t$
and where again we used Galilean invariance to set the mean wavenumber to zero
as well as the phase invariance to set $a_1$ and $a_2$ to be real and non-negative without loss of generality,
so that $w=v_1-v_2=2\kappa$, or equivalently $\kappa=w/2$. 
Similarly to \cite{JNLS2000v10p291}, we then look for perturbed solutions in the form
\be
\label{e:perturbationzansatz}
\@q(x,t) = (\, q_1 \,,\, q_2\,)^\t,\qquad 
q_j(x,t) = q_{o,j}(x,t)\,\big( 1 + \delta\,u_j(x,t)\,\big)\,,\qquad j = 1,2,
\ee
where the parameter $0<\delta\ll1$ is the perturbation strength.
Substituting~\eqref{e:perturbationzansatz} into~\eqref{e:ManakovISTnormalization} and neglecting terms at $O(\delta^2)$ and higher then yields the linearized Manakov system
for $\@u(x,t) = ( u_1,u_2)^\t$ as
\vspace*{-1ex}
\be
\ii\@u_t+2\ii \kappa\sigma_3\@u_x+\@u_{xx}-2A(\@u+\@u^*)=0, 
\qquad
\sigma_3=\diag(1,-1),
\qquad
A=\begin{pmatrix}a_1^2&a_2^2\\a_1^2&a_2^2\end{pmatrix}.
\label{e:linearizedManakov}
\ee
where the asterisk denotes complex conjugation.
Since~\eqref{e:linearizedManakov} is a linear system of PDEs, without loss of generality we can look for
solutions in the form of pure Fourier modes as
\be
\@u(x,t) = \@f\,\e^{\ii\xi(x - \omega t)} + \@g^*\,\e^{-\ii\xi(x-\omega^*t)}\,.
\label{e:perturbationFourier}
\ee
Here, $\omega_o$ in~\eqref{e:qOIC} is the temporal frequency of the background plane wave, whereas $\omega$ in~\eqref{e:perturbationFourier} is the phase velocity of the perturbation mode; the corresponding temporal frequency is $\xi\omega$. 
Substituting~\eqref{e:perturbationFourier} into~\eqref{e:linearizedManakov} and collecting the coefficients of 
$\e^{\pm\ii\xi(x - \omega t)}$ and $\e^{\pm\ii\xi(x - \omega^*t)}$ then yields the homogeneous linear system of algebraic equations
\bse
\be
M_4\,\@x_4 = \@0,\qquad
M_4 = \@1_4\@a_4^\t + D_4\,,\qquad
\ee
for the unknown vector $\@x_4 = (f_1,g_1,f_2,g_2)^\t$,
with $\@1_4 = (1,1,1,1)^t$,
\begin{equation}
\begin{alignedat}{2}
\@a_4 &= (a_1^2,a_1^2,a_2^2,a_2^2)^t,\qquad
& D_4 &= \diag(d_1,d_2,d_3,d_4),
\\
d_1 &= \half\xi(2\kappa+\xi-\omega),
& d_2 &= \half\xi(-2\kappa+\xi+\omega),
\\
d_3 &= \half\xi(-2\kappa+\xi-\omega),
& d_4 &= \half\xi(2\kappa+\xi+\omega).
\end{alignedat}
\end{equation}
\ese
In order for nontrivial solutions to exist, one needs $\det M_4 = 0$, which yields the linearized dispersion relation $\omega = \omega(\xi)$.
It is a tedious but relatively straightforward calculation to show that 
\be
\det M_4 = d_1 d_2 d_3 d_4 \big( 1 + a_1^2 C_1 + a_2^2 C_2 \big)\,,\qquad
C_1 = \frac1{d_1} + \frac1{d_2}\,,
\qquad
C_2 = \frac1{d_3} + \frac1{d_4}\,.
\ee
Clearing denominators then yields the following implicit equation for the linearized dispersion relation:
\be
P_{\rm lin}(\kappa,\xi,\omega)=0, 
\qquad
P_{\rm lin}(\kappa,\xi,\omega)=
\left((\omega-2\kappa)^2-4a_1^2-\xi^2\right)
\left((\omega+2\kappa)^2-4a_2^2-\xi^2\right)
-16a_1^2a_2^2\,. 
\label{e:linearizeddispersionrelation}
\ee
Equation~\eqref{e:linearizeddispersionrelation} 
(modulo slightly different but equivalent normalizations)
was first obtained in \cite{JNLS2000v10p291}. 
There, the authors then proceeded to study it in various asymptotic limits, since the corresponding polynomial defies direct factorization.
On the other hand, we next show that, despite this fact, it is still possible to obtain a complete  
characterization of the stability and instability regimes.
To this end, it is convenient to again employ symmetrized and rescaled variables by introducing the quantities
\be
a_1^2 = \half \rho(1 + s_3)\,,\qquad
a_2^2 = \half \rho (1 - s_3)\,,\qquad 
\omega = \sqrt{2\rho}\,\Omega\,,\qquad
\kappa = \sqrt{\rho/2}\,\mathcal K\,, \qquad
\xi = \sqrt{2\rho}\,\Xi\,.
\ee
Then one obtains
\bse
\begin{gather}
P_{\rm lin}(\kappa,\xi,\omega)=4\rho^2\widetilde P_{\rm lin}(\mathcal K,\Xi,\Omega,s_3), 
\\
\widetilde P_{\rm lin}(\mathcal K,\Xi,\Omega,s_3)=
\left((\Omega-\mathcal K)^2-\Xi^2-1-s_3\right)
\left((\Omega+\mathcal K)^2-\Xi^2-1+s_3\right)
-(1-s_3^2)\,. 
\label{e:scaledlinearizedquartic}
\end{gather}
\ese
Note how the total density $\rho$ has completely disappeared from $\widetilde P_{\rm lin}(\mathcal K,\Xi,\Omega,s_3)$. Therefore, we can now characterize the roots $\Omega_1,\dots,\Omega_4$ of $\widetilde P_{\rm lin}(\mathcal K,\Xi,\Omega,s_3)$ based solely on the values of $\mathcal K$, $\Xi$, and $s_3$. 
Explicitly, we have:

\begin{lemma}
For fixed $\Xi$, the linearly unstable plane-wave configurations are the values of $(\mathcal K,s_3)$ for which 
\bse
\label{e:Wdiscriminant}
\be
\widetilde\D(\mathcal K,\Xi,s_3)<0, 
\ee
where
\begin{gather}
\widetilde\D(\mathcal K,\Xi,s_3)=\widetilde\D_0^3-\widetilde\D_1^2, 
\\
\widetilde\D_0=4\big(\mathcal K^4-\mathcal K^2(\Xi^2+1)+\Xi^4\big)+8\Xi^2+1, 
\\
\widetilde\D_1=27\mathcal K^2s_3^2+(\mathcal K^2+\Xi^2+1)
\big(8\mathcal K^4-20\mathcal K^2(\Xi^2+1)+8\Xi^2(\Xi^2+2)-1\big)\,. 
\end{gather}
\ese
\end{lemma}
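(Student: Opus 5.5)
Instability of the mode \eqref{e:perturbationFourier} means that some root $\Omega$ of the quartic $\widetilde P_{\rm lin}$ in \eqref{e:scaledlinearizedquartic} has nonzero imaginary part, since the growth rate is $\xi\,\Im\omega$. The plan is therefore to apply Lemma~\ref{l:Stevic} to $\widetilde P_{\rm lin}$, viewed as a monic real quartic in $\Omega$ with $(\mathcal K,\Xi,s_3)$ as parameters. This mirrors the treatment of the Whitham quartic $p(c,\@z)$ in section~\ref{s:modulationalstability}.

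First, expand $\widetilde P_{\rm lin}$ and read off its coefficients. Writing $\alpha=\Xi^2+1$, the two factors are $(\Omega\mp\mathcal K)^2-\alpha\mp s_3$, and their product minus $(1-s_3^2)$ gives
\[
\Omega^4-2(\mathcal K^2+\alpha)\Omega^2-4\mathcal K s_3\Omega+(\mathcal K^2-\alpha)^2-1 .
\]
In particular $b=0$, so the quantities of Lemma~\ref{l:Stevic} simplify. We have $\D_0=c^2+12e$ and $\D_1=2c^3+27d^2-72ce$, with $c=-2(\mathcal K^2+\alpha)$, $d=-4\mathcal K s_3$ and $e=(\mathcal K^2-\alpha)^2-1$.

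Next, compute these quantities and match them to the stated forms. The factor $4$ in $\widetilde\D_0$ and the coefficient $27\mathcal K^2s_3^2$ in $\widetilde\D_1$ suggest that $\D_0=4\widetilde\D_0$ and $\D_1=-16\widetilde\D_1$ (or the same with $+16$). The $d^2$ term gives $27\cdot16\mathcal K^2s_3^2$, which fixes the normalization. Then $\D=\frac1{27}(4\D_0^3-\D_1^2)=\frac{256}{27}\widetilde\D$, so $\operatorname{sign}\D=\operatorname{sign}\widetilde\D$. I would verify the factorization of $\D_1$ as $(\mathcal K^2+\Xi^2+1)(8\mathcal K^4-\dots)$ plus the $s_3$ term by direct expansion. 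This is routine, though it must be done carefully.

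Then the cases. If $\D<0$, Lemma~\ref{l:Stevic}(a) gives a complex-conjugate pair, hence instability. Conversely, one has to show that $\D\ge0$ gives only real roots. For this I need $\P<0$ and $\R<0$ throughout, as in section~\ref{s:modulationalstability}. Here $\P=8c=-16(\mathcal K^2+\alpha)<0$ automatically. For $\R=64e-16c^2$,
\[
\R=64\big[(\mathcal K^2-\alpha)^2-1-(\mathcal K^2+\alpha)^2\big]=-64(4\mathcal K^2\alpha+1)<0 .
\]
So case (b1) yields four real roots when $\D>0$, and case (c) with $\P,\R<0$ (subcases c1, c3, c4(i)) yields all real roots when $\D=0$. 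Hence instability occurs if and only if $\widetilde\D<0$.

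The main obstacle is the algebraic matching of $\D_1$ with the product form of $\widetilde\D_1$. One must also be careful about the strict versus nonstrict inequality: on the degenerate locus $\D=0$ all roots are real, so that locus counts as (spectrally) stable for this fixed $\Xi$. This is consistent with the strict inequality in the statement.
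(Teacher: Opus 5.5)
Your proof is correct and follows essentially the same route as the paper: apply Lemma~\ref{l:Stevic} to $\widetilde P_{\rm lin}$ as a quartic in $\Omega$, check that $\P=-16(\mathcal K^2+\Xi^2+1)<0$ and $\R=-64\big(4\mathcal K^2(\Xi^2+1)+1\big)<0$, and use $\D=\frac{256}{27}\widetilde\D$. Two small remarks: the expansion gives $\D_0=4\widetilde\D_0$ and $\D_1=+16\,\widetilde\D_1$ (the sign is irrelevant since only $\D_1^2$ enters), and on $\D=0$ only subcases (c1) and (c3) can occur, because $\R<0$ rules out (c4).
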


\begin{proof}
We again make use of Lemma~\ref{l:Stevic} and apply it to the polynomial $\widetilde P_{\rm lin}(\mathcal K,\Xi,\Omega,s_3)$ in the variable $\Omega$. The corresponding quantities are 
\be
\P=-16(\mathcal K^2+\Xi^2+1), 
\qquad
\R=-64\big(4\mathcal K^2(\Xi^2+1)+1\big), 
\ee
and the polynomial discriminant is
\be
\operatorname{disc}_{\Omega}\widetilde P_{\rm lin}
    = \tfrac{256}{27}\,\widetilde\D(\mathcal K,\Xi,s_3)\,. 
\ee
Now note the following:
(i) All quantities are even in $\mathcal K$, $\Xi$ and $s_3$. 
This means that we can limit ourselves to study the first quadrant of the $\mathcal K\Xi$-plane or of the $\mathcal K s_3$-plane without loss of generality.
(ii) $\P$ and $\R$ are independent of $s_3$, but $\D$ is not.
(iii) $\P$ and $\R$ are always negative.
Therefore, by Lemma~\ref{l:Stevic}, the existence of complex roots for $\Omega$ 
(which always appear in complex conjugate pairs, and one of which would therefore correspond to unstable modes) is in one-to-one correspondence with the discriminant~$\D$ in~\eqref{e:Wdiscriminant} assuming negative values.
\end{proof}

\begin{figure}[t!]
\centering
\includegraphics[width=0.90\textwidth]{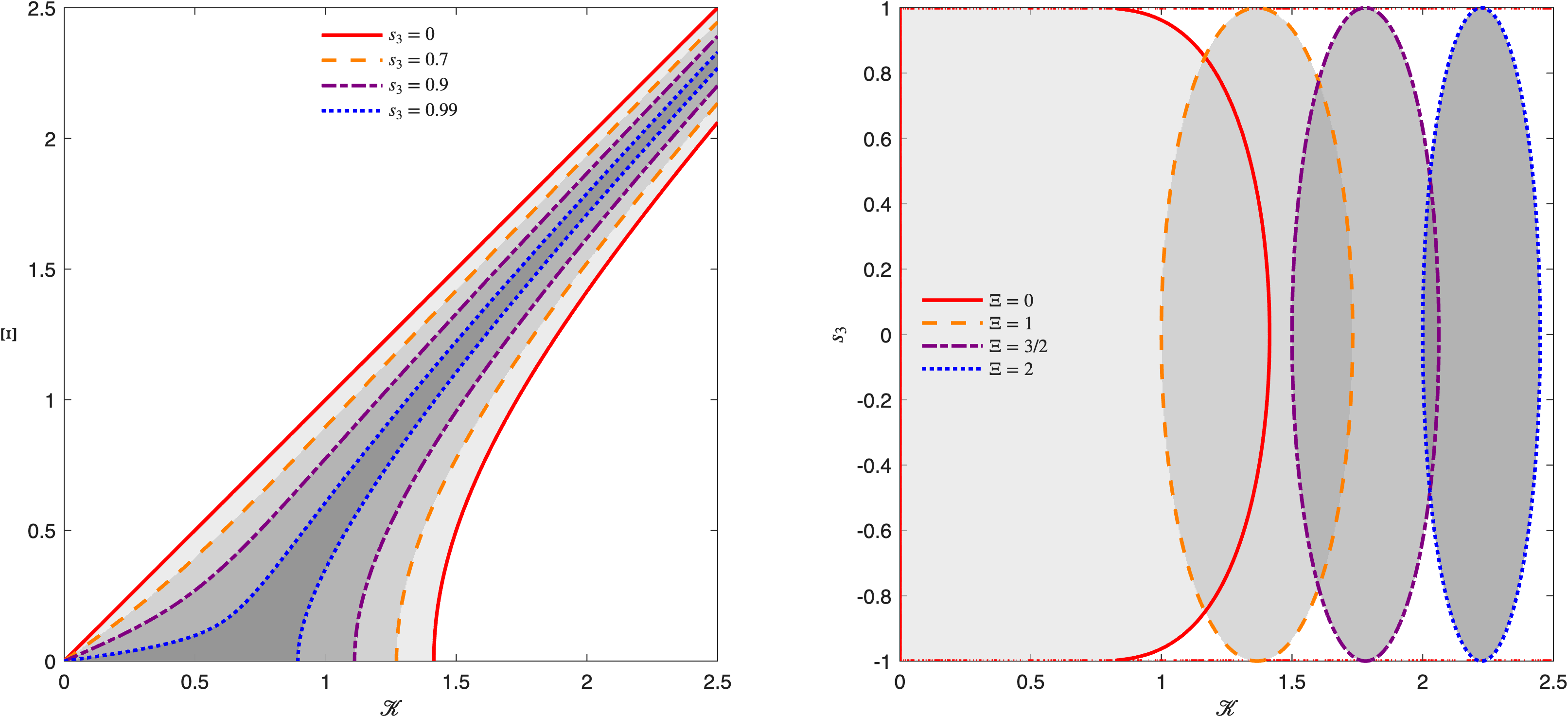}
\caption{Left panel: Instability regions and corresponding thresholds in the $\mathcal K\Xi$-plane, with $\mathcal K$ on the horizontal axis and $\Xi$ on the vertical axis, for $s_3=0$ (red), $s_3=0.7$ (orange), $s_3=0.9$ (purple), and $s_3=0.99$ (blue). In each case, unstable configurations lie in the gray region, shown with progressively darker shading for increasing values of $s_3$. Right panel: Instability regions and corresponding thresholds in the $\mathcal K s_3$-plane, with $\mathcal K$ on the horizontal axis and $s_3$ on the vertical axis, for $\Xi=0$ (red), 1 (orange), 3/2 (purple), and 2 (blue). In each case, unstable configurations lie in the gray region, shown with progressively darker shading for increasing values of $\Xi$.}
\label{f:thresholds}
\end{figure}

The left panel of Fig.~\ref{f:thresholds} shows the instability regions and the corresponding stability thresholds in the $\mathcal K\Xi$-plane for different values of $s_3$, 
while the right panel of Fig.~\ref{f:thresholds} shows the instability regions and the corresponding stability thresholds in the $\mathcal K s_3$-plane for different values of $\Xi$. 

We end this section by showing that, in the long wave limit, the predictions of the above linearized stability analysis coincide exactly with those from Whitham modulation theory.
To this end,
note that, when $\Xi=0$, the linearized quartic and the characteristic-speed quartic satisfy the polynomial identity
\be
\widetilde P_{\rm lin}(W/\sqrt2,0,\sqrt2C,s_3)
=4g(C,W,s_3).
\label{e:longwavepolynomialidentity}
\ee
The discriminant used above factors as
\be
\widetilde\D(\mathcal K,0,s_3)
=27(1-s_3^2)\mathcal K^2\check\D(\mathcal K,s_3), 
\label{e:linearizedDzero}
\ee
where
\be
\check\D(\mathcal K,s_3)=27\mathcal K^2s_3^2+(\mathcal K^2-2)(4\mathcal K^2+1)^2\,. 
\ee
Then, since
$\check\D(W/\sqrt2,s_3)=\tfrac12d(W,s_3)$,
the finite-wavenumber discriminant criterion reduces at $\Xi=0$ to the baseband criterion in Lemma~\ref{l:unstable}.

The identity above determines the exact $\Xi=0$ limit, but it does not resolve the behavior at small nonzero $\Xi$ when the Whitham discriminant vanishes. The latter is described by the following result.

\begin{proposition}
\label{p:degenerate-longwave-instability}
Suppose that $|s_3|<1$ and that
\be
\check\D(\mathcal K,s_3)=0.
\ee
Then, as $\Xi\to0$,
\be
\widetilde\D(\mathcal K,\Xi,s_3)
=
-6(2\mathcal K^2-1)^4(4\mathcal K^2+1)\Xi^2
+O(\Xi^4).
\label{e:degenerate-D-expansion}
\ee
Consequently, for every such point on the nontrivial Whitham boundary, there exists $\Xi_0>0$ such that
\be
\widetilde\D(\mathcal K,\Xi,s_3)<0, 
\qquad
0<|\Xi|<\Xi_0.
\ee
Thus, although the characteristic speeds are all real on this boundary, the corresponding plane waves are linearly unstable for arbitrarily small nonzero perturbation wavenumbers.
\end{proposition}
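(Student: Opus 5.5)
The plan is to expand $\widetilde\D=\widetilde\D_0^3-\widetilde\D_1^2$ directly in powers of $\Xi^2$ and use the hypothesis $\check\D=0$ to kill the constant term. Every quantity involved is a polynomial in $\Xi^2$, so I will write
\[
\widetilde\D_0=A_0+A_1\Xi^2+O(\Xi^4),\qquad \widetilde\D_1=B_0+B_1\Xi^2+O(\Xi^4).
\]
Reading these off the explicit formulas, and writing $u=\mathcal K^2$, gives
\[
A_0=4u^2-4u+1=(2u-1)^2,\qquad A_1=8-4u,
\]
\[
B_0=27us_3^2+(u+1)(8u^2-20u-1),\qquad B_1=(8u^2-20u-1)+(u+1)(16-20u).
\]
The constant term is then $A_0^3-B_0^2=\widetilde\D(\mathcal K,0,s_3)$. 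By~\eqref{e:linearizedDzero} this equals $27(1-s_3^2)\mathcal K^2\check\D$, and so it vanishes under the hypothesis. The $\Xi^2$ coefficient is $3A_0^2A_1-2B_0B_1$.

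The key step is to use $\check\D=0$ to remove $s_3$ from $B_0$. The vanishing constant term gives $B_0^2=A_0^3=(2u-1)^6$, so $B_0=\pm(2u-1)^3$. To fix the sign I will use $\check\D=0$ directly. It says $27us_3^2=-(u-2)(4u+1)^2$, and substituting this gives
\[
B_0=-(u-2)(4u+1)^2+(u+1)(8u^2-20u-1).
\]
Expanding should collapse to $-(2u-1)^3$; I will verify this by direct expansion, for instance by checking $u=0$ (both sides equal $1$) and $u=1/2$ (both sides equal $0$).

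The $\Xi^2$ coefficient then becomes
\[
3(2u-1)^4(8-4u)+2(2u-1)^3B_1,
\]
which factors as $(2u-1)^3\big[3(2u-1)(8-4u)+2B_1\big]$. Expanding $B_1=-12u^2-16u+15$, the bracket is $-24u^2+60u-24-24u^2-32u+30=-48u^2+28u+6$. I expect this to equal $-2(2u-1)(12u+3)$, which is $-6(2u-1)(4u+1)$; expanding confirms $-6(8u^2-2u-1)=-48u^2+12u+6$. The middle terms do not agree ($28u$ versus $12u$), so this is the main obstacle: at least one of $A_1$, $B_1$ must be recomputed carefully from the $\Xi^2$ terms of $\widetilde\D_0$ and $\widetilde\D_1$, including the $8\Xi^2$ term in $\widetilde\D_0$ and the product structure of $\widetilde\D_1$. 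The target is the form $-6(2u-1)^4(4u+1)$. Once the bookkeeping is settled, the claimed expansion~\eqref{e:degenerate-D-expansion} follows. The $O(\Xi^4)$ remainder is harmless because $\widetilde\D$ is a polynomial in $\Xi^2$.

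For the conclusion, note first that $2\mathcal K^2\ne1$ on the boundary. The remark after Lemma~\ref{l:Delta=0} shows that $W^2=1$, that is $\mathcal K^2=\tfrac12$, forces $|s_3|=1$, which the hypothesis excludes. Since also $4\mathcal K^2+1>0$, the $\Xi^2$ coefficient is strictly negative. The polynomial in $\Xi^2$ therefore has a strictly negative leading term after its vanishing constant term, so $\widetilde\D<0$ for $0<|\Xi|<\Xi_0$. By the preceding lemma, this means a complex-conjugate pair of $\Omega$ roots exists, which gives the stated instability.
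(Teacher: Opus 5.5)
Your route is the same as the paper's. You impose $\check\D=0$ in the form $27\mathcal K^2s_3^2=-(\mathcal K^2-2)(4\mathcal K^2+1)^2$, expand $\widetilde\D_0^3-\widetilde\D_1^2$ to first order in $\Xi^2$, and then fix the sign using $\mathcal K^2\ne\tfrac12$ (forced by $|s_3|<1$) and $4\mathcal K^2+1>0$. As written, however, the proposal does not prove the expansion~\eqref{e:degenerate-D-expansion}, because it ends at a mismatch that you leave unresolved.

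The cause is a single arithmetic slip in $B_1$. Your $A_1=8-4u$ is correct, so rechecking the $8\Xi^2$ term in $\widetilde\D_0$ will not help. Since $(u+1)(16-20u)=-20u^2-4u+16$,
\[
B_1=(8u^2-20u-1)+(-20u^2-4u+16)=-12u^2-24u+15,
\]
not $-12u^2-16u+15$. Then $2B_1=-24u^2-48u+30$ and
\[
3(2u-1)(8-4u)+2B_1=(-24u^2+60u-24)+(-24u^2-48u+30)=-48u^2+12u+6=-6(2u-1)(4u+1).
\]
So the $\Xi^2$ coefficient is $(2u-1)^3\cdot\bigl(-6(2u-1)(4u+1)\bigr)=-6(2u-1)^4(4u+1)$, exactly as claimed. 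Your concluding argument then goes through unchanged.

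A smaller point: checking $B_0=-(2u-1)^3$ at $u=0$ and $u=\tfrac12$ does not prove a cubic identity. Even with matching leading coefficients you would need a third point, so do the full expansion. A tidier way to organize it is the identity $(u+1)(8u^2-20u-1)=(2u-1)^3-27u$, which gives $B_0=(2u-1)^3-27u(1-s_3^2)$ for all $s_3$. From this,
\[
A_0^3-B_0^2=27u(1-s_3^2)\bigl[2(2u-1)^3-27u(1-s_3^2)\bigr]=27u(1-s_3^2)\,\check\D(\mathcal K,s_3),
\]
which re-derives~\eqref{e:linearizedDzero}. On $\check\D=0$ it gives $27u(1-s_3^2)=2(2u-1)^3$, hence $B_0=-(2u-1)^3$ directly, with no sign ambiguity to resolve.
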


\begin{proof}
On the boundary $\check\D(\mathcal K,s_3)=0$, one has 
\be
27\mathcal K^2s_3^2=-(\mathcal K^2-2)(4\mathcal K^2+1)^2. 
\ee
Substituting this relation into the expressions for $\widetilde\D_0$ and $\widetilde\D_1$ in~\eqref{e:Wdiscriminant}, and expanding $\widetilde\D=\widetilde\D_0^3-\widetilde\D_1^2$ about $\Xi=0$, gives
\be
\widetilde\D(\mathcal K,\Xi,s_3) = -6(2\mathcal K^2-1)^4(4\mathcal K^2+1)\Xi^2 +O(\Xi^4). 
\ee
The coefficient of $\Xi^2$ is strictly negative on the two-component boundary $|s_3|<1$. Indeed, if $\mathcal K^2=\half$, then 
\be
\check\D(\mathcal K,s_3)=\tfrac{27}{2}(s_3^2-1), 
\ee
so that $\check\D=0$ would require $|s_3|=1$. Hence $\mathcal K^2\ne\half$ under the hypotheses of the proposition, and therefore 
$-6(2\mathcal K^2-1)^4(4\mathcal K^2+1)<0$. 
It follows that $\widetilde\D(\mathcal K,\Xi,s_3)<0$ for all sufficiently small nonzero $\Xi$. 
By the discriminant criterion~\eqref{e:Wdiscriminant}, the linearized quartic then has a complex-conjugate pair of roots, which proves the claim.
\end{proof}

The above discussion
resolves the degenerate case left open by the dispersionless analysis in section~\ref{s:modulationalstability}. 
On the nontrivial boundary $\D=0$, the repeated characteristic speed is exactly double for $|s_3|<1$, and~\eqref{e:degenerate-D-expansion} implies that the corresponding complex-conjugate pair satisfies $\Im\Omega=O(|\Xi|)$ as $\Xi\to0$. Since the temporal growth rate of~\eqref{e:perturbationFourier} is $|\xi\,\Im\omega|$, with $\omega=\sqrt{2\rho}\,\Omega$ and $\xi=\sqrt{2\rho}\,\Xi$, the growth rate on the Whitham boundary is therefore $O(\xi^2)$. 
In the interior of the baseband-unstable region, in contrast, $\Im\Omega$ has a nonzero limit as $\Xi\to0$, giving the usual $O(|\xi|)$ growth. 
Thus the Whitham boundary is unstable for arbitrarily small nonzero perturbation wavenumbers, but with a weaker growth rate than in the interior.
\section{Numerical validation}
\label{s:numerics}

To illustrate the predictions of the previous sections, we integrated numerically the Manakov system~\eqref{e:ManakovISTnormalization} with $\eps=1$ and a variety of initial conditions.
We did so using an eighth-order Fourier split-step method, 2048 spatial grid points, and an integration step size not larger than $3\cdot10^{-4}$.
The initial conditions considered are an exact two-component plane-wave solution $\@q_o(x,0)$ plus a small perturbation, namely:
\bse
\label{e:ICs}
\be
\@q(x,0) = \@q_o(x) + \@q_p(x)\,,
\ee
with 
\be
\@q_o(x) = (A_1\,\e^{\ii Vx},A_2\,\e^{-\ii V x})^\t,
\ee
similarly to~\eqref{e:qOIC}, and with 
\be
\@q_p(x) = \eta\,(\xi_1,\xi_2)^\t,
\label{e:noise}
\ee
\ese
where, at each spatial grid point, $\xi_1$ and $\xi_2$ are taken to be complex numbers with real and imaginary parts given by independent identically distributed normal random variables with zero mean and unit variance. 
In each of the runs described below, the spatial domain was chosen so that the total domain size was a multiple of $2\pi/V$, to ensure the periodicity of the initial condition, and so that
$\Delta x$ was no larger than~0.09.
The dimensionless parameter $\eta $ quantifies the strength of the perturbation.
We used the value $\eta = 0.05$ in all the runs (except for Fig.~\ref{f:seeded} below, which used a different kind of perturbation).

\begin{figure}[t!]
\centering
\includegraphics[width=0.95\textwidth]{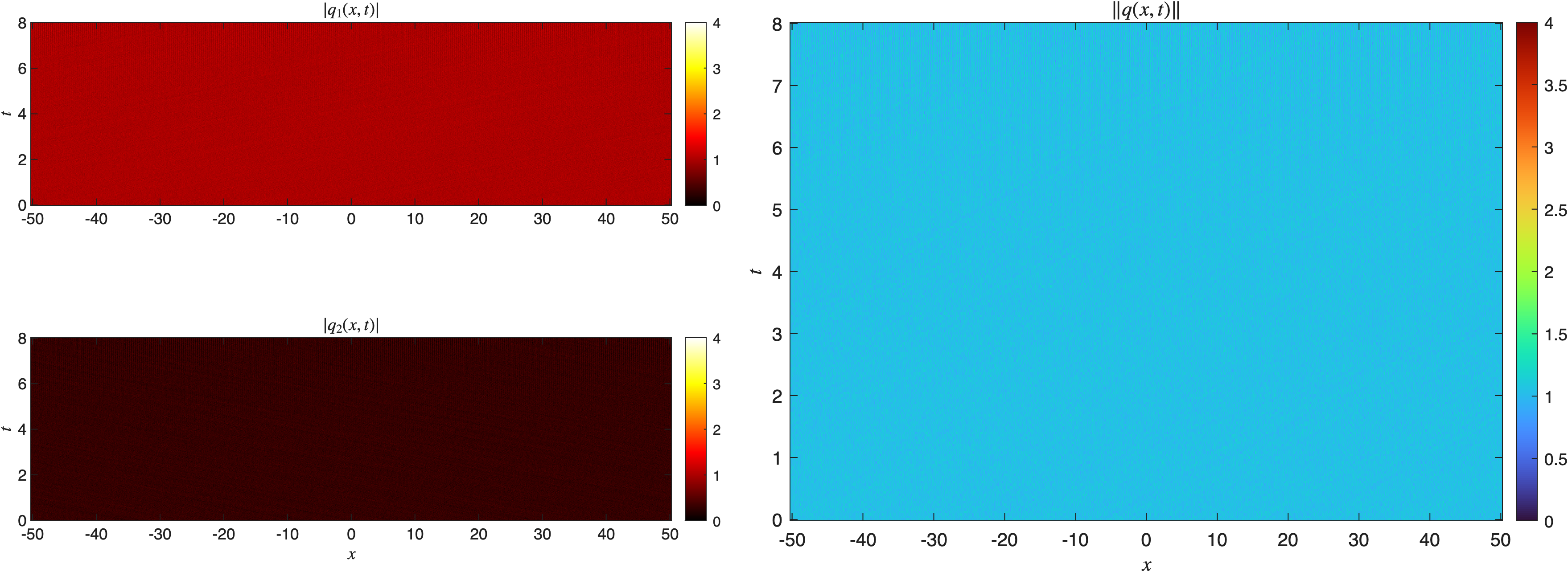}
\caption{Time evolution of an initial condition for the Manakov system consisting of a linearly stable plane wave plus a small perturbation. Left panels: Density plots of $|q_1(x,t)|$ and $|q_2(x,t)|$. Right panel: Density plot of $\|\@q(x,t)\|$. See the text for further details.}
\label{f:stable}
\end{figure}

Figure~\ref{f:stable} shows the evolution of  baseband-stable  initial condition obtained by setting $A_1 = 1$, $A_2=0.3$ and $V = 10$ in~\eqref{e:ICs}.  
In this case, the perturbation shows no appreciable growth over the time interval displayed.

(We should point out that, while Whitham stability analysis of~\ref{s:modulationalstability}
predicts stability with respect to long-wavelength perturbations,
the linearized stability analysis of section~\ref{s:linearization} still predicts instability with respect to perturbations with a narrow range of large wavenumbers.
On the other hand, the corresponding growth rates are sufficiently small that such perturbations do not become $O(1)$ over the time scales considered.)

A very different outcome is produced when the same kind of perturbation is added to  baseband-unstable  plane wave solution.
Figure~\ref{f:unstable} shows the time evolution of an unstable plane wave, obtained by setting $A_1=A_2=1$ and $V=1$ in~\eqref{e:ICs}.
It is evident how, in this case, the small random initial perturbation grows to eventually become comparable with the unperturbed solution.

\begin{figure}[t!]
\centering
\includegraphics[width=0.95\textwidth]{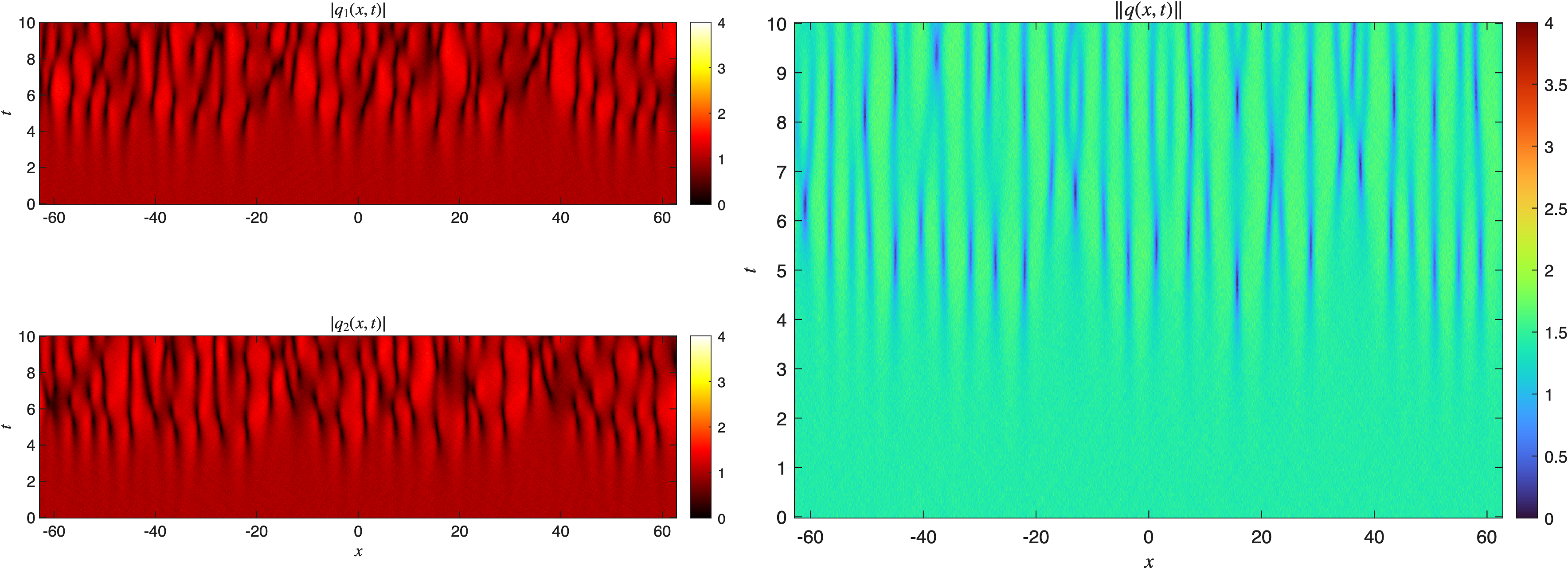}
\caption{Same as Fig.~\ref{f:stable}, but for an unstable initial plane wave configuration. See the text for further details.}
\label{f:unstable}
\end{figure}
\begin{figure}[t!]
\centering
\includegraphics[width=0.95\textwidth]{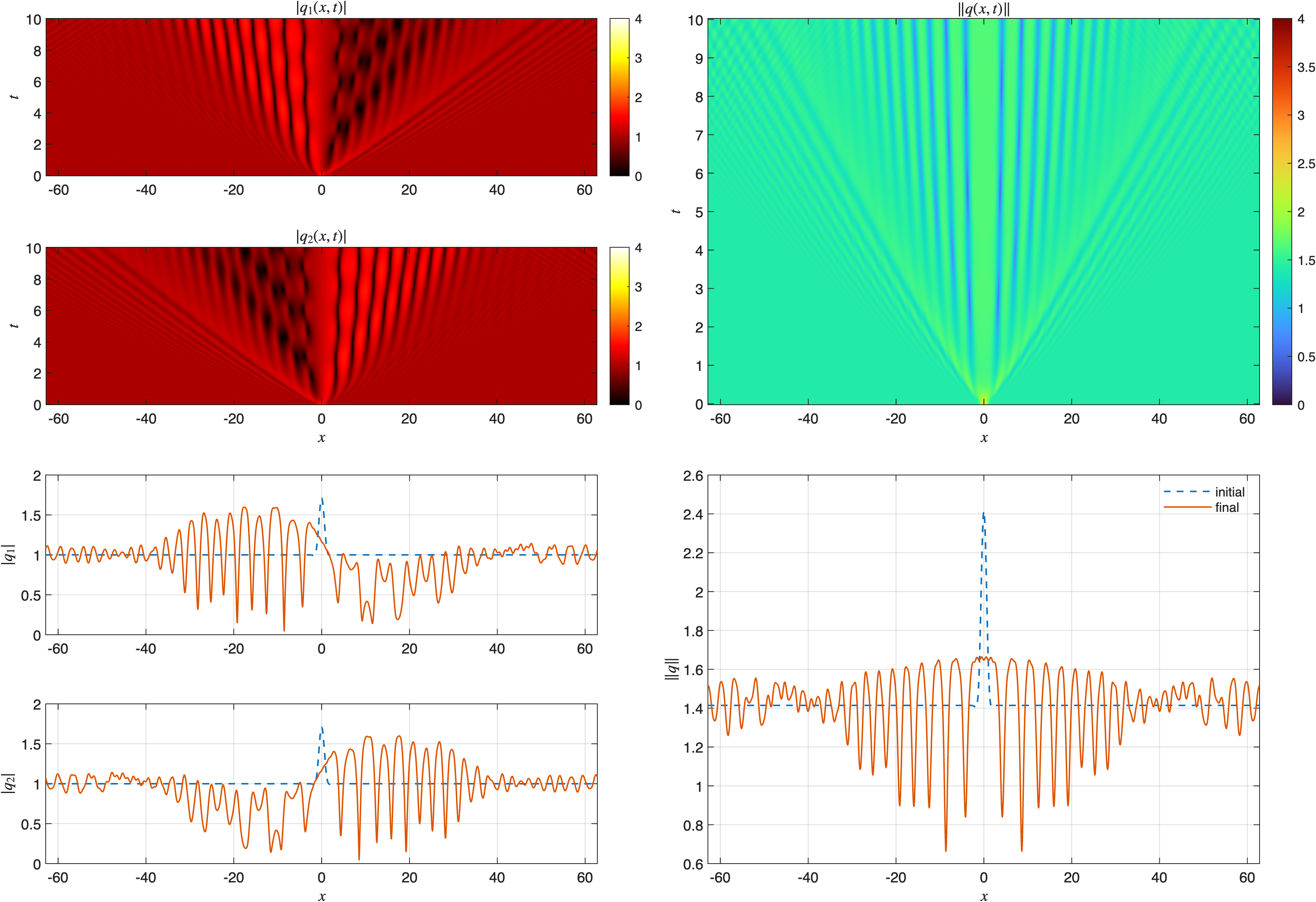}
\caption{Top row: Same as Fig.~\ref{f:unstable}, but with a seeded perturbation. Bottom row: The initial condition (blue) and final profile (red) of the solution. See the text for further details.}
\label{f:seeded}
\end{figure}

A further different outcome is obtained when the random perturbation in Fig.~\ref{f:unstable} is replaced by a localized disturbance. Figure~\ref{f:seeded} shows the evolution from the same unstable plane-wave background, with $A_1=A_2=1$ and $V=1$, but  where $\@q_p(x)$ in~\eqref{e:ICs} is given by 
\be
\@q_p(x)=\e^{-x^2}\big(\e^{\ii\phi}\cos\theta,\e^{-\ii\phi}\sin\theta\big)^\t,
\qquad
\theta=\pi/4,
\qquad
\phi=0,
\ee
 instead of~\eqref{e:noise}. 
No additional amplitude perturbation factor is used in this simulation; 
the perturbation has peak component amplitude $1/\sqrt2$ and peak vector norm equal to one. 
The localized perturbation evolves into a two-component generalization of the dispersive shock wave structures observed in~\cite{BM2016} for the focusing NLS equation, here arising in the defocusing regime. The spatial profiles of the initial and final states are shown in the bottom row of Fig.~\ref{f:seeded}.
Similar two-component structures were also recently observed experimentally in repulsive two-component Bose-Einstein condensates \cite{PRL2025v135p113401}.
In that case, however, the instability was driven by the immiscibility between the two components, as opposed to the relative counterflow as in this case.

\section{Conclusions}
\label{s:conclusions}

In this work we have studied the dispersionless (genus-zero) limit of the defocusing Manakov system.
Starting from a two-fluid Madelung representation, we derived the four-component hydrodynamic-type
Manakov--Whitham system~\eqref{e:ManakovDispless} and showed that it passes the Haantjes tensor test,
consistent with its expected integrability.
We then used the Lax pair of the Manakov system to construct the spectral curve associated with the two-component plane wave solutions, and showed that its branch points are the roots of a quartic (the discriminant of the cubic spectral curve) whose coefficients we computed explicitly in the symmetrized variables $(\rho,s_3,v,w)$.
The central structural result is that each simple branch point of the spectral curve is a genuine local Riemann invariant of the dispersionless Manakov system, with a characteristic speed given by the simple relation $c_j = \lambda_j - k_j$.
Next, using Galilean invariance to formulate the problem to the symmetric frame, we reduced the inversion task on the generic branch $ws_3\ne0$ to two candidate values of $w^2$ together, with the physical admissibility conditions in~\eqref{e:selectionrule}. 
The general $v\ne0$ inversion is treated in~\cite{AB2026}.
The characteristic-speed discriminant gives a complete classification of baseband modulational stability and instability. 
We determined the unstable region in the $(W,s_3)$ plane and proved that the dimensionless baseband growth coefficient decreases with $|s_3|$, with maximum $1/(2\sqrt2)$ at $s_3=0$ and $|W|=\sqrt{3/2}$.
We also characterized the non-strictly-hyperbolic boundary and used direct linearization to describe finite-wavenumber instability, including unstable bands that need not intersect the baseband region.

The significance of these results is twofold. 
On the structural side, Wright~\cite{Wright2013} had already proposed that the branch points of the trigonal spectral curve serve as Riemann invariants for the Manakov Whitham system and confirmed this correspondence in a nontrivial dispersionless example.
Related Riemann-diagonal reductions were subsequently obtained in~\cite{Kamchatnov2014}. Theorem~\ref{t:branchpoints} establishes the corresponding general local result at genus zero for \textit{arbitrary simple ramification points}. 
In particular, the exact identity~\eqref{e:keyidentity} implies directly that $\nabla_{\@y}k_j$ is a left eigenvector of the hydrodynamic coefficient matrix with eigenvalue $c_j=\lambda_j-k_j$, and hence that each simple branch point satisfies the associated Riemann-invariant equation. 
This result provides  a direct, self-contained proof of the branch-point/Riemann-invariant correspondence without recourse to the finite-gap construction.
On the applied side, the explicit and complete stability classification, together with the closed-form expression for the maximal baseband growth coefficient, characterizes the onset of baseband modulational instability of two-component plane waves purely in terms of the physical parameters (total density, polarization imbalance, and counterflow), which is of direct relevance to experiments in nonlinear optics and two-component Bose--Einstein condensates.

Recall that Forest et al.~\cite{JNLS2000v10p291} derived the linearized dispersion relation and analyzed long-wave and intermediate-wavelength instability. Section~\ref{s:linearization} rewrites the same relation as a discriminant criterion and identifies its $\Xi=0$ reduction with the Whitham hyperbolicity boundary. 
Experimental evidence for baseband and passband polarization modulation instability in a defocusing Manakov fiber system was reported in~\cite{Frisquet2015}.
The modulational instability of a related, but non-integrable, two-component NLS system was recently investigated theoretically and experimentally in~\cite{PRL2025v135p113401}.
There, however, the two components were not allowed to counterpropagate, whereas in the present work the counterflow $w$ is precisely the parameter that drives the instability. Indeed, it would be interesting to understand to what extent the mechanism identified here persists in the non-integrable setting.
Finally, we note that the discriminant quartic, its branch points, and the associated modulational instability were examined from a different viewpoint in~\cite{Baronio2014}.
As was discussed in sections~\ref{s:Laxpair} and~\ref{s:modulationalstability}, after the corresponding change of variables the symmetric branch-point quartic and the baseband instability condition agree with those obtained in~\cite{Baronio2014}. The present work extends that analysis by deriving the genus-zero Whitham system and proving the general local branch-point/Riemann-invariant correspondence, addressing the inversion from spectral to physical variables, relating collisions of branch points to collisions of characteristic speeds, determining the monotonicity and maximum of the baseband growth coefficient, and obtaining a discriminant characterization of the full finite-wavenumber instability.

We should mention that an alternative derivation of the Riemann invariants was recently obtained in a separate
paper~\cite{AB2026} without making use of the spectral quartic, and where a covariant formulation of the hydrodynamic
system~\eqref{e:ExactCompContinuity}--\eqref{e:ExactCompVelocity} was also derived.
Similarly, even though the Manakov system is invariant under Galilean transformations, the inversion problem in the
general case $v\ne 0$ is considerably more involved than in the symmetric case: it can ultimately be reduced to finding
a single scalar quantity via the roots of a twelve-degree polynomial.
A parallel analysis, including a setting that avoids this twelve-degree inversion via the signed resolvent cover of the
Kowalevski--Bloch manifold, is developed in~\cite{AB2026}.

The results of this work open the way to a number of follow-up problems.
First, having the dispersionless Manakov system in Riemann-invariant form makes it natural to study Riemann problems for the two-component system, in analogy with the classical theory for scalar hydrodynamics and for the NLS equation. As a byproduct, this would provide a route to the vector generalization of the dam-break problem for the
defocusing NLS equation studied in~\cite{OL20p2291}, as well as to the theory of vector dispersive shock waves.
Such Riemann-problem studies would also connect naturally to the analysis of nonlinear polarization waves and
dispersive shock waves in two-component condensates~\cite{CongyKamchatnovPavloff2016,IvanovKamchatnovCongyPavloff2017}.
Second, a natural next step will be the extension of the present genus-zero analysis to the genus-one (and, more generally, higher-genus) Whitham system, where the modulation of the periodic wave solutions is governed by the full finite-gap spectral data.
Finally, it will be of interest to extend the analysis to the focusing Manakov system and to the $N$-component vector NLS equation, where the interplay between the higher-dimensional spectral curve and the modulation dynamics is largely unexplored.
We plan to address some of these questions in the future.

\bigskip
\paragraph{Acknowledgments.}
This work originated from conversations between G.B.\ and Mark Hoefer in which the idea of associating the Riemann invariants of the dispersionless Manakov system to the branch points of the spectral curve of plane wave solutions of the Manakov system was initially conceived, and we
thank Mark Hoefer for many insightful discussions on this topic.
We also thank Gennady El, Amrita Ganapathy and Barbara Prinari for many interesting conversations on various topics related to the present work.  J.A. was partially supported by NSF grant PHY-2316622.
G.B.\ was partially supported by the Simons Foundation under grant number SFI-MPS-TSM-00013369.

\addcontentsline{toc}{section}{References}
\def\reftitle#1{``#1''}
\def\booktitle#1{\textit{#1}}

\end{document}